\newtheorem{Theorem}{Theorem}
\newtheorem{Proposition}[Theorem]{Proposition}
\newtheorem{Lemma}[Theorem]{Lemma}
\newtheorem{Remark}[Theorem]{Remark}
\thanks{INRIA Sophia Antipolis, France, e-mail:
Natalia.Osipova@sophia.inria.fr}  \thanks{The work was supported
by France Telecom R\&D Grant ``Mod\'elisation et Gestion du Trafic
R\'eseaux Internet'' no. 42937433.}}
\begin{document}

\makeRR

\section{Introduction}

The Discriminatory Processor Sharing (DPS) policy was introduced
by Kleinrock \cite{Kle67}. Under the DPS policy jobs are organized
in classes, which share a single server. The capacity that each
class obtains depends on the number of jobs currently presented in
all classes. All jobs present in the system are served
simultaneously at rates controlled by the vector of weights
$g_k>0$, $k=1,\ldots,M\}$, where $M$ is the number of classes. If
there are $N_j$ jobs in  class $j$, then each job of this class is
served with the rate $g_j/\sum_{k=1}^M{g_k N_k}$. When all weights
are equal, DPS system is equivalent to the standard PS policy.

The DPS policy model has recently received a lot of attention due
to its wide range of application. For example, DPS could be
applied to model flow level sharing of TCP flows with different
flow characteristics such as different RTTs and packet loss
probabilities. DPS also provides a natural approach to model the
weighted round-robin discipline, which is used in operating
systems for task scheduling. In the Internet one can imagine the
situation that servers provide different service according to the
payment rates. For more applications of DPS in communication
networks see \cite{AJK04}, \cite{BT01}, \cite{CBBLR05},
\cite{GM02}, \cite{HT05}.

Varying DPS weights it is possible to give priority to different
classes at the expense of others, control their instantaneous
service rates and optimize different system characteristics as
mean sojourn time and so on. So, the proper weight selection is an
important task, which is not easy to solve because of the model's
complexity.

The previously obtained results on DPS model are the following.
Kleinrock  in {\cite{Kle67}} was first studying DPS. Then the
paper of Fayolle et al. {\cite{FMI90}} provided results for the
DPS model. For the exponentially distributed required service
times the authors obtained the expression of the expected sojourn
time as a solution of a system of linear equations. The authors
show that independently of the weights the slowdown for the
expected conditional response time under the DPS policy tends to
the constant slowdown of the PS policy as the service requirements
increases to infinity.

Rege and Sengupta in {\cite{RS94}} 
proved a decomposition theorem for the conditional sojourn time.
For exponential service time distributions in {\cite{RS96}} they
obtained higher moments of the queue length distribution as the
solutions of linear equations system and also provided a theorem
for the heavy-traffic regime. Van Kessel et al. in \cite{KNB05},
\cite{KK06} study the performance of DPS in an asymptotic regime
using time scaling. For general distributions of the required
service times the approximation analysis was carried out by Guo
and Matta in {\cite{GM02}}.  Altman et al. {\cite{AJK04}} study
the behavior of the DPS policy in overload. Most of the results
obtained for the DPS queue were collected together in the survey
paper of Altman et al. {\cite{AAA06}}.

Avrachenkov et al. in {\cite{AABN}} 
proved that the mean unconditional response time of each class is
finite under the usual stability condition. They determine the
asymptote of the conditional sojourn time for each class assuming
finite service time distribution with finite variance.

The problem of weights selection in the DPS policy when the job
size distributions are exponential was studied by Avrachenkov et
al. in {\cite{AABN}} and by Kim and Kim in {\cite{KK06}}. In
{\cite{KK06}} it was shown that the DPS policy reduces the
expected sojourn time in comparison with PS policy when the
weights increase in the opposite order with the means of job
classes. Also in {\cite{KK06}} the authors formulate a conjecture
about the monotonicity of the expected sojourn time of the DPS
policy. The idea of conjecture is that comparing two DPS policies,
one which has a weight vector closer to the optimal policy provided by $c\mu$-rule, see {\cite{RR94}}, 
has smaller expected sojourn time. Using the method
described in {{\cite{KK06}}} in  the present paper we prove this
conjecture with some restrictions on the system parameters. The
restrictions on the system are such that the result is true for
systems for which the values of the job size distribution means
are very different from each other.  The restriction can be
overcome by setting the same weights for the classes, which have
similar means. The condition on means is a sufficient, but not a
necessary condition. It becomes less strict when the system is
less loaded.

The paper is organized as follows. In Section~{\ref{sec:dps:def}} we give general
definitions of the DPS policy and formulate the problem of expected sojourn time minimization. In
Section~{\ref{sec:dps:monot}} we formulate the main Theorem and prove it.
In Section~{\ref{sec:dps:numer_res}} we give the numerical results. Some technical
proofs can be found in the Appendix.

\section{Previous results and problem formulation}{\label{sec:dps:def}}

We consider the Discriminatory Processor Sharing (DPS) model. All
jobs are organized in $M$ classes and share a single server. Jobs
of class $k=1,\ldots,M$ arrive with a Poisson process with rate
$\lambda_k$ and have required service-time distribution
$F_k(x)=1-e^{-\mu_k x}$ with mean $1/\mu_k$. The load of the
system is $\rho=\sum_{k=1}^M{\rho_k}$ and
$\rho_k=\lambda_k/\mu_k$, $k=1,\ldots,M$. We consider that the
system is stable, $\rho<1$. Let us denote
$\lambda=\sum_{k=1}^{M}{\lambda_k}$.

The state of the system is controlled by a vector of weights
$g=(g_1,\ldots,g_M)$, which denotes the priority for the job
classes. If in the class $k$ there are currently $N_k$ jobs, then
each job of class $k$ is served with the rate equal to
$g_j/\sum_{k=1}^M{g_k N_k}$, which depends on the current system
state, or on the number of jobs in each class.

Let $\overline{T}^{DPS}$ be the expected sojourn time of the DPS
system. We have
\begin{eqnarray*}
&& \overline{T}^{DPS} = \sum_{k=1}^{M}
{\frac{\lambda_k}{\lambda}\overline{T}_k},
\end{eqnarray*}
where $\overline{T}_k$ are expected sojourn times for class $k$.
The expressions for the expected sojourn times $\overline{T}_k$,
$k=1,\ldots,M$ can be found as a solution of the system of linear
equations, see {\cite{FMI90}},
\begin{eqnarray}{\label{eq:dps:T_k_system}}
\overline{T}_k \left( 1-\sum_{j=1}^{M}{\frac{\lambda_j g_j}{\mu_j
g_j+\mu_kg_k}}\right)-\sum_{j=1}^{M}{\frac{\lambda_j g_j
\overline{T}_j }{\mu_j g_j+\mu_k g_k}} = \frac{1}{\mu_k}, \quad
k=1,\ldots,M.
\end{eqnarray}

Let us notice that for the standard Processor Sharing system
\begin{eqnarray*}
&& \overline{T}^{PS} = \frac{m}{1-\rho}.
\end{eqnarray*}

One of the problems when studying DPS is to minimize the expected
sojourn time $\overline{T}^{DPS}$ with some weight selection.
Namely, find $g^*$ such as
\begin{eqnarray*}
&& \overline{T}^{DPS}(g^*)=\min_g{\overline{T}^{DPS}(g)}.
\end{eqnarray*}
This is a general problem and to simplify it the following subcase
is considered. To find a set $G$ such that
\begin{eqnarray}
&& \overline{T}^{DPS}(g^*) \leq {\overline{T}^{PS}} , \quad
\forall g^* \in G. \label{dps:eq:dop:probl_dps_ps}
\end{eqnarray}
For the case when job size distributions are exponential the solution of
({\ref{dps:eq:dop:probl_dps_ps}}) is given by Kim and Kim in
{\cite{KK06}} and is as follows. If the means of the classes are
such as $ \mu_1 \geq \mu_2 \geq \ldots \geq \mu_M$, then $G$ consists
of all such vectors which satisfy
\begin{eqnarray*}
&&  G=\{g|\,\,g_1 \geq g_2\geq \ldots \geq g_M\}.
\end{eqnarray*}
Using the approach of {\cite{KK06}} we solve more general problem
about the monotonicity of the expected sojourn time in the DPS
system, which we formulate in the following section as
Theorem~{\ref{theorem:dps:T_monot}}.

\section{Expected sojourn time monotonicity}{\label{sec:dps:monot}}

Let us formulate and prove the following Theorem.

\begin{Theorem}{\label{theorem:dps:T_monot}}
Let the job size distribution for every class be exponential with
mean $\mu_i$, $i=1,\ldots, M$ and we enumerate them in the
following way
\begin{eqnarray*}
&& \mu_1 \geq \mu_2 \geq \ldots \geq \mu_M.
\end{eqnarray*}
Let us consider two different weight policies for the DPS system,
which we denote as $\alpha$ and $\beta$. Let $\alpha, \beta \in
G$, or
\begin{eqnarray*}
&& \alpha_1 \geq \alpha_2 \geq \ldots \geq \alpha_M, \\
&& \beta_1 \geq \beta_2 \geq \ldots \geq \beta_M.
\end{eqnarray*}
The expected sojourn time of the DPS policies with weight vectors $\alpha$ and $\beta$ satisfies
\begin{eqnarray}
&& \overline{T}^{DPS}(\alpha) \leq \overline{T}^{DPS}(\beta),
\label{eq:dps:T_alpha_beta_compar}
\end{eqnarray}
if the weights $\alpha$ and $\beta$ are such that:
\begin{eqnarray}
&& \frac{\alpha_{i+1}}{\alpha_i} \leq \frac{\beta_{i+1}}{\beta_i},
\quad i=1,\ldots,M-1,  \label{eq:dps:alpha_beta_compar}
\end{eqnarray}
and the following restriction is satisfied:
\begin{eqnarray}
&& \quad \frac{\mu_{j+1}}{\mu_j} \leq  1-\rho,  \label{eq:dps:condit_satisf}
\end{eqnarray}
for every $j=1,\ldots,M$.
\end{Theorem}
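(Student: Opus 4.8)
The plan is to prove (\ref{eq:dps:T_alpha_beta_compar}) by showing that $\overline{T}^{DPS}$ varies monotonically along a path of weight vectors joining $\beta$ to $\alpha$. Since the service rates $g_j/\sum_k g_k N_k$, and hence every coefficient $\lambda_j g_j/(\mu_j g_j + \mu_k g_k)$ in the linear system (\ref{eq:dps:T_k_system}), depend only on the ratios $g_k/g_j$, the quantity $\overline{T}^{DPS}$ is invariant under a common rescaling of all weights. I would therefore normalize $g_1 = 1$ and encode a weight vector by its consecutive ratios $r_i = g_{i+1}/g_i$, $i = 1,\ldots,M-1$. Condition (\ref{eq:dps:alpha_beta_compar}) states exactly that $\alpha$ has no larger ratio than $\beta$ in every coordinate, so the family $r_i(\theta) = (1-\theta)\,\beta_{i+1}/\beta_i + \theta\,\alpha_{i+1}/\alpha_i$, $\theta \in [0,1]$, stays inside $G$ (all ratios remain in $(0,1]$, so each intermediate vector is non-increasing) and interpolates from $\beta$ to $\alpha$ with each $r_i(\theta)$ non-increasing in $\theta$. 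It then suffices to prove $\frac{d}{d\theta}\overline{T}^{DPS} \le 0$ along this family, and for that it is enough to show the partial monotonicity $\partial \overline{T}^{DPS}/\partial r_i \ge 0$ for each $i$.

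To compute the derivative I would differentiate the linear system (\ref{eq:dps:T_k_system}) implicitly. Writing it as $A(g)\,\overline{T} = b$ with $b_k = 1/\mu_k$, and expressing the objective as $\overline{T}^{DPS} = \lambda^{-1}\sum_k \lambda_k \overline{T}_k$, one obtains
\begin{eqnarray*}
&& \frac{\partial \overline{T}^{DPS}}{\partial r_i} = -\frac{1}{\lambda}\,\lambda^{\top} A^{-1}\,\frac{\partial A}{\partial r_i}\,\overline{T},
\end{eqnarray*}
so the problem reduces to controlling the sign of this bilinear form. The entries of $\partial A/\partial r_i$ are explicit: perturbing a single ratio changes each affected coefficient $\lambda_j g_j/(\mu_j g_j + \mu_k g_k)$ with a definite sign (positive when the numerator weight grows relative to the other, negative in the opposite case), so $\partial A/\partial r_i$ is a sparse, sign-structured matrix. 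The remaining ingredient is an ordering lemma for the solution itself, namely that faster classes enjoy shorter sojourn times, $\overline{T}_1 \le \overline{T}_2 \le \ldots \le \overline{T}_M$; I would establish this directly from (\ref{eq:dps:T_k_system}) together with $\mu_1 \ge \ldots \ge \mu_M$ and $\alpha,\beta \in G$, by comparing the equations for consecutive indices.

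The main obstacle is signing the bilinear form, because $A^{-1}$ couples all classes and its entries are not individually sign-definite. The key is that the separation hypothesis (\ref{eq:dps:condit_satisf}), $\mu_{j+1}/\mu_j \le 1-\rho$, makes this coupling weak: geometrically well-separated means force the off-diagonal coefficients $\lambda_j g_j/(\mu_j g_j + \mu_k g_k)$, $j \ne k$, to be dominated by the diagonal contribution, so that $A$ is, up to the explicit normalization, diagonally dominant and $A^{-1}$ preserves the relevant positivity. I would make this quantitative by bounding, for each fixed $i$, the action of $\partial A/\partial r_i$ on the ordered vector $\overline{T}$ and showing that under (\ref{eq:dps:condit_satisf}) the positive diagonal-type contributions dominate the negative cross terms, yielding $\partial \overline{T}^{DPS}/\partial r_i \ge 0$. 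Integrating this inequality over $\theta \in [0,1]$ then gives $\overline{T}^{DPS}(\alpha) \le \overline{T}^{DPS}(\beta)$, which is (\ref{eq:dps:T_alpha_beta_compar}). I expect the separation condition to enter precisely at the comparison of these competing terms, which explains why it may be relaxed when the load $\rho$ is small and why classes with nearly equal means should be assigned a common weight.
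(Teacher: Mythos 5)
Your reduction to a derivative inequality is legitimate as far as it goes: $\overline{T}^{DPS}$ depends only on the consecutive ratios, the linear interpolation of ratios stays in $G$, and the implicit-differentiation formula for $\partial \overline{T}^{DPS}/\partial r_i$ is correct (note also that the inverse of the system matrix is automatically entrywise nonnegative by the Neumann series, since $D^{(g)}+A^{(g)}$ is nonnegative with spectral radius below one, so positivity of the left row vector is free). But all of the difficulty of the theorem is concentrated in the one step you leave as a sketch --- signing the bilinear form $\lambda^{\top}A^{-1}\,(\partial A/\partial r_i)\,\overline{T}$ --- and the mechanism you propose for it does not work. The separation condition (\ref{eq:dps:condit_satisf}) does not act by making the off-diagonal coupling weak: the coefficients $\sigma_{ij}^{(g)}=g_j/(\mu_i g_i+\mu_j g_j)$ do not shrink as the means separate, and as $\rho \to 1$ the condition becomes ever more stringent even though nothing like diagonal dominance improves. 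In the paper's argument the condition enters at exactly one place, and it is not a dominance estimate: it guarantees the ordering $\tilde{\mu}_1 \geq \ldots \geq \tilde{\mu}_M$ of the vector $\tilde{\mu}=\mu^{T}(E-D)^{-1}$ (Lemma~\ref{lemma:tilde_mu_inequal}), which feeds a fixed-point induction (Lemmas~\ref{lemma:dps:tilde_A_contraction}, \ref{lemma:dps:y_n_to_y}, \ref{lemma:dps:y_n_decreases}, \ref{lemma:tilde_A_inequal}) proving that the \emph{left} vector $y=\underline{1}'(E-B^{(\alpha)})^{-1}M$ is monotone, $y_1 \geq \ldots \geq y_M$.

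This points to the second, more structural gap: you are trying to order the wrong vector. The ordering lemma you invoke, $\overline{T}_1 \leq \ldots \leq \overline{T}_M$, is itself unproven in your sketch, and even granted it would not sign the form, because the perturbation matrix has entries of both signs in every block and $A^{-1}$ mixes them. What makes the paper's computation close is the exact identity of Lemma~\ref{lemma:dps:lemma_sigma_ij}, $\sigma_{ij}^{(g)}/\mu_i+\sigma_{ji}^{(g)}/\mu_j=1/(\mu_i\mu_j)$, which symmetrizes the perturbation term and collapses the difference (via the discrete resolvent identity, so no continuous path is needed) into $\sum_{i,j}\bigl(\sigma_{ij}^{(\alpha)}-\sigma_{ij}^{(\beta)}\bigr)(y_i-y_j)\mu_i^{-1}\overline{T}_j^{(\beta)}$, in which every summand is sign-definite by Lemma~\ref{lemma:dps:sigma_prop} once $y$ is monotone; only \emph{positivity} of $\overline{T}^{(\beta)}$ is used, never its ordering. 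Without an analogue of this symmetrization and a proof of monotonicity for the left vector (which is where (\ref{eq:dps:condit_satisf}) is genuinely consumed), your ``positive terms dominate negative cross terms'' step has no content, and it is precisely where the theorem lives. A smaller omission: your derivative formula keeps general $\lambda_k$, whereas the paper's machinery is established for $\lambda_i=1$ and extended by the rescaling/splitting/continuity argument of Proposition~\ref{propos:dps:extension}; any completed version of your approach would need either that reduction or a check that the key inequalities survive unequal arrival rates.
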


\begin{Remark} {\label{remark:dps:weight_sel}}
If for some classes $j$ and $j+1$ condition
(\ref{eq:dps:condit_satisf}) is not satisfied, then in practice,
by choosing the weights of these classes to be equal, we can still
use Theorem~{\ref{theorem:dps:T_monot}}. Namely, for classes such
as $\frac{\mu_{j+1}}{\mu_j} > 1-\rho$, we suggest to set
$\alpha_{j+1}=\alpha_j$ and $\beta_{j+1}=\beta_j$.
\end{Remark}

\begin{Remark}
Theorem~{\ref{theorem:dps:T_monot}} shows that the expected
sojourn time $\overline{T}^{DPS}(g)$ is monotonous according to
the selection of weight vector $g$. The closer is the weight
vector to the optimal policy, provided by $c \mu$-rule, the smaller is the expected
sojourn time. This is shown by the condition
({\ref{eq:dps:alpha_beta_compar}}), which shows that vector
$\alpha$ is closer to the optimal $c \mu$-rule policy then vector
$\beta$. 

Theorem~{{\ref{theorem:dps:T_monot}}} is proved with restriction
({\ref{eq:dps:condit_satisf}}). This restriction is a sufficient
and not a necessary condition on system parameters. It shows that
the means of the job classes have to be quite different from each
other. This restriction can be overcome, giving the same weights
to the job classes, which mean values are similar. Condition
({\ref{eq:dps:condit_satisf}}) becomes less strict as the system
becomes less loaded.
\end{Remark}

To prove Theorem~{\ref{theorem:dps:T_monot}} let us first give
some notations and prove additional Lemmas.

Let us rewrite linear system ({\ref{eq:dps:T_k_system}}) in the
matrix form. Let
$\overline{T}^{(g)}=[\overline{T}_1^{(g)},\ldots,\overline{T}_M^{(g)}]^{T}$
be the vector of $\overline{T}_k^{(g)}$, $k=1,\ldots,M$. Here by
$[\mbox{\,\,}]^{T}$ we mean transpose sign, so $[\mbox{\,\,}]^{T}$
is a vector. By $[\mbox{\,\,}]^{(g)}$ we note that this element
depends on the weight vector selection $g\in G$. Let us consider
that later in the paper vectors $g, \alpha, \beta \in G$, if the
opposite is not noticed.  Let define matrices $A^{(g)}$ and
$D^{(g)}$ in the following way.
\begin{equation}
A^{(g)} =\left(\\
\begin{array}{l l l l l}
\frac{\lambda_1 g_1}{\mu_1 g_1 +\mu_1 g_1} & \frac{\lambda_2 g_2}{\mu_1 g_1 +\mu_2 g_2} & \ldots& \frac{\lambda_M g_M}{\mu_1 g_1 +\mu_M g_M} \\
\frac{\lambda_1 g_1}{\mu_2 g_2 +\mu_1 g_1} & \frac{\lambda_2 g_2}{\mu_2 g_2 +\mu_2 g_2} & \ldots& \frac{\lambda_M g_M}{\mu_2 g_2 +\mu_M g_M} \\
\ldots\\
\frac{\lambda_1 g_1}{\mu_M g_M +\mu_1 g_1} & \frac{\lambda_2 g_2}{\mu_M g_M +\mu_2 g_2} & \ldots& \frac{\lambda_M g_M}{\mu_M g_M +\mu_M g_M} \\
\end{array}\right) \label{eq:dps:A_matr_general}
\end{equation}
\begin{equation}
D^{(g)} =\left(\\
\begin{array}{l l l l l}
\sum_i{\frac{\lambda_i g_i}{\mu_1 g_1 +\mu_i g_i}}  & 0 & \ldots & 0 \\
0 & \sum_i{\frac{\lambda_i g_i}{\mu_2 g_2 +\mu_i g_i}} & \ldots & 0 \\
\ldots\\
0 & 0 & \ldots & \sum_i{\frac{\lambda_i g_i}{\mu_M g_M +\mu_i g_i}} \\
\end{array}\right)\label{eq:dps:D_matr_general}
\end{equation}
Then ({\ref{eq:dps:T_k_system}})  becomes
\begin{eqnarray}
&& (E-D^{(g)}-A^{(g)}) \overline{T}^{(g)}
=\left[\frac{1}{\mu_1}\ldots.\frac{1}{\mu_M}\right]^{T}.
\label{eq:dps:system_matrix_form}
\end{eqnarray}
We need to find the expected sojourn time of the DPS system
$\overline{T}^{DPS}(g)$. According to the definition of
$\overline{T}^{DPS}(g)$ and equation
(\ref{eq:dps:system_matrix_form}) we have
\begin{eqnarray}
&& \overline{T}^{DPS}(g) =
\frac{1}{\lambda}[\lambda_1,\ldots,\lambda_M] \overline{T}^{(g)} =
\frac{1}{\lambda}[\lambda_1,\ldots,\lambda_M]
(E-D^{(g)}-A^{(g)})^{-1}
\left[\frac{1}{\mu_1},\ldots,\frac{1}{\mu_M}\right]^{T}.
\label{eq:dps:T_mart_form}
\end{eqnarray}

Let us consider the case when $\lambda_i=1$ for $i=1,\ldots,M$.
This results can be extended for the case when $\lambda_i$ are
different, we prove it following the approach of {\cite{KK06}} in
Proposition~{\ref{propos:dps:extension}} at the end of the current
Section. Equation (\ref{eq:dps:T_mart_form}) becomes
\begin{eqnarray}
\overline{T}^{DPS}(g) & = &\underline{1}'(E-D^{(g)}-A^{(g)})^{-1}
\left[{\rho_1},\ldots,\rho_M\right]^{T}{\lambda^{-1}}.
\label{eq:dps:T_matr_form_lambda_eq}
\end{eqnarray}
Let us give the following notations.
\begin{eqnarray*}
&& \sigma_{ij}^{(g)}=\frac{g_j}{\mu_i g_i+\mu_j g_j}.
\end{eqnarray*}
Then $\sigma_{ij}^{(g)}$ have the following properties.
\begin{Lemma} {\label{lemma:dps:lemma_sigma_ij}}
 $\sigma_{ij}^{(g)}$ and $\sigma_{ji}^{(g)}$ satisfy
\begin{eqnarray}
&& \sigma_{ij}^{(g)} g_i=\sigma_{ji}^{(g)}{g_j}, \nonumber\\
&& \frac{\sigma_{ij}^{(g)}} {\mu_i} +
\frac{\sigma_{ji}^{(g)}}{\mu_j} =\frac{1}{\mu_i\mu_j}.
\label{eq:dps:eq:sigma_propert}
\end{eqnarray}
\end{Lemma}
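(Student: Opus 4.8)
The plan is to prove both identities by direct computation, exploiting the fact that the denominator appearing in the definition of $\sigma_{ij}^{(g)}$ is symmetric under interchange of the indices $i$ and $j$. Writing $d_{ij}=\mu_i g_i+\mu_j g_j$ for brevity, we have by definition $\sigma_{ij}^{(g)}=g_j/d_{ij}$, and since $d_{ji}=\mu_j g_j+\mu_i g_i=d_{ij}$, also $\sigma_{ji}^{(g)}=g_i/d_{ij}$. Thus the two quantities share a common denominator, and it is this observation that makes both identities fall out immediately.

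For the first identity I would simply multiply through: $\sigma_{ij}^{(g)} g_i=g_i g_j/d_{ij}$ and $\sigma_{ji}^{(g)} g_j=g_i g_j/d_{ij}$, so the two expressions coincide. For the second identity I would combine the two terms over the common denominator $d_{ij}$ and recognise that the resulting numerator reconstitutes $d_{ij}$:
\begin{eqnarray*}
\frac{\sigma_{ij}^{(g)}}{\mu_i}+\frac{\sigma_{ji}^{(g)}}{\mu_j}
=\frac{1}{d_{ij}}\left(\frac{g_j}{\mu_i}+\frac{g_i}{\mu_j}\right)
=\frac{1}{d_{ij}}\cdot\frac{\mu_i g_i+\mu_j g_j}{\mu_i\mu_j}
=\frac{1}{\mu_i\mu_j},
\end{eqnarray*}
where in the last step the factor $\mu_i g_i+\mu_j g_j=d_{ij}$ cancels against the denominator.

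Since both identities reduce to elementary manipulations over a single common denominator, there is no genuine obstacle here; the only point worth keeping in mind is the symmetry $d_{ij}=d_{ji}$, which is precisely what ties $\sigma_{ij}^{(g)}$ to $\sigma_{ji}^{(g)}$ and is used implicitly in each step. These relations will later be the workhorse for re-expressing the off-diagonal entries of $A^{(g)}$ in terms of the diagonal structure, so the value of the lemma lies not in its difficulty but in packaging this symmetry for repeated use.
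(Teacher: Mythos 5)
Your proof is correct and takes the same route as the paper, which simply states that the lemma follows from the definition of $\sigma_{ij}^{(g)}$; you have merely written out the elementary computation (common denominator $d_{ij}=d_{ji}$) that the paper leaves implicit.
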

\begin{proof} Follows from the definition of $\sigma_{ij}^{(g)}$.
\end{proof}
Then matrices $A^{(g)}$ and $D^{(g)}$  given by
(\ref{eq:dps:A_matr_general}) and (\ref{eq:dps:D_matr_general})
can be rewritten in the terms of $\sigma_{ij}^{(g)}$.
\begin{eqnarray*}
&& A_{i,j}^{(g)}=\sigma_{ij}^{(g)}, \quad  i,j = 1,\ldots,M,  \\
&& D_{i,i}^{(g)} = \sum_j{\sigma_{ij}^{(g)}}, \quad i=1,\ldots,M, \\
&& D_{i,j}^{(g)}=0, \quad i,j=1,\ldots,M, \quad i\neq j.
\end{eqnarray*}
For weight vectors $\alpha, \beta$ the following Lemma is true.
\begin{Lemma}{\label{lemma:dps:weights_compar}}
If $\alpha$ and $\beta$ satisfy (\ref{eq:dps:alpha_beta_compar}),
then
\begin{eqnarray}{\label{eq:dps:alpha_beta_general_compar}}
&& \frac{\alpha_j}{\alpha_i} \leq \frac{\beta_j}{\beta_i}, \quad
i=1,\ldots,M-1, \quad \forall \, j\geq i.
\end{eqnarray}
\end{Lemma}
\begin{proof} Let us notice that if $a<b$ and $c<d$, then $ac<bd$ when $a,b,c,d$ are
positive. Also if $j>i$ then there exist such $l>0$ that $j=i+l$.
Then
\begin{eqnarray*}
&& \frac{\alpha_{i+1}}{\alpha_i} \leq \frac{\beta_{i+1}}{\beta_i},
 \quad \frac{\alpha_{i+2}}{\alpha_{i+1}} \leq
\frac{\beta_{i+2}}{\beta_{i+1}}, \quad \ldots \quad
\frac{\alpha_{i+l}}{\alpha_{i+l-1}} \leq
\frac{\beta_{i+l}}{\beta_{i+l-1}}, \quad i=1,\ldots,M-2.
\end{eqnarray*}
Multiplying left and right parts of the previous inequalities we
get the following:
\begin{eqnarray*}
\frac{\alpha_{i+l}}{\alpha_{i}} \leq
\frac{\beta_{i+l}}{\beta_{i}}, \quad i=1,\ldots,M-2,
\end{eqnarray*}
which proves Lemma~{\ref{lemma:dps:weights_compar}}.
\end{proof}

\begin{Lemma}\label{lemma:dps:sigma_prop}
If $\alpha$ and $\beta$ satisfy
({\ref{eq:dps:alpha_beta_general_compar}}), then
\begin{eqnarray*}
&& \sigma_{ij}^{(\alpha)} \leq \sigma_{ij}^{(\beta)}, \quad i \leq j,\\
&& \sigma_{ij}^{(\alpha)} \geq \sigma_{ij}^{(\beta)}, \quad i \geq
j.
\end{eqnarray*}
\end{Lemma}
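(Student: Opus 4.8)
The plan is to reduce both inequalities to the monotonicity of a single one-variable function in the ratio of weights. First I would divide the numerator and denominator of $\sigma_{ij}^{(g)}=\frac{g_j}{\mu_i g_i+\mu_j g_j}$ by $g_i$, obtaining
\[
\sigma_{ij}^{(g)}=\frac{g_j/g_i}{\mu_i+\mu_j\,(g_j/g_i)}=f_{ij}\!\left(\frac{g_j}{g_i}\right),\qquad f_{ij}(r):=\frac{r}{\mu_i+\mu_j r}.
\]
This isolates the entire dependence on the weight vector into the single ratio $r=g_j/g_i$, so that comparing $\sigma_{ij}^{(\alpha)}$ with $\sigma_{ij}^{(\beta)}$ reduces to comparing $\alpha_j/\alpha_i$ with $\beta_j/\beta_i$ together with the monotonicity of $f_{ij}$.

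Next I would verify that $f_{ij}$ is strictly increasing on $r>0$ by a direct differentiation,
\[
f_{ij}'(r)=\frac{\mu_i}{(\mu_i+\mu_j r)^2}>0 ,
\]
which shows that $\sigma_{ij}^{(g)}$ is an increasing function of the ratio $g_j/g_i$. Hence the sign of the difference $\sigma_{ij}^{(\alpha)}-\sigma_{ij}^{(\beta)}$ is exactly the sign of $\frac{\alpha_j}{\alpha_i}-\frac{\beta_j}{\beta_i}$.

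The final step is to feed in the ratio comparison furnished by Lemma~\ref{lemma:dps:weights_compar}, taking care of the direction of the inequality in the two regimes. For $i\le j$ the lemma gives $\alpha_j/\alpha_i\le\beta_j/\beta_i$ directly, and monotonicity of $f_{ij}$ yields $\sigma_{ij}^{(\alpha)}\le\sigma_{ij}^{(\beta)}$. For $i\ge j$ I would instead apply the lemma with the indices in the order $j\le i$, obtaining $\alpha_i/\alpha_j\le\beta_i/\beta_j$, i.e. $\alpha_j/\alpha_i\ge\beta_j/\beta_i$, so that monotonicity gives the reversed conclusion $\sigma_{ij}^{(\alpha)}\ge\sigma_{ij}^{(\beta)}$. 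The diagonal case $i=j$ is immediate since $\sigma_{ii}^{(g)}=\frac{1}{2\mu_i}$ does not depend on $g$, so both inequalities hold with equality.

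I do not expect a genuine analytic obstacle here: once $\sigma_{ij}^{(g)}$ is written as $f_{ij}(g_j/g_i)$ with $f_{ij}$ increasing, everything follows. The only place where an error could creep in is the bookkeeping of which ratio is larger in the case $i\ge j$, since this is precisely where the sign of the conclusion flips and where the two clauses of the Lemma differ.
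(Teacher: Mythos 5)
Your proof is correct and takes essentially the same route as the paper: the paper likewise reduces the claim to the ratio comparison $\alpha_j/\alpha_i \le \beta_j/\beta_i$ from Lemma~\ref{lemma:dps:weights_compar} and then verifies, by cross-multiplying and adding $\mu_j\alpha_j\beta_j$ to both sides, exactly the monotonicity of $\sigma_{ij}^{(g)}$ in the ratio $g_j/g_i$ that you establish by differentiating $f_{ij}$, with the second inequality handled symmetrically just as in your flipped-index case.
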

\begin{proof} As ({\ref{eq:dps:alpha_beta_general_compar}}) then
\begin{eqnarray*}
&& \frac{\alpha_j}{\alpha_i}\leq\frac{ \beta_j}{\beta_i}, \quad i
\leq j, \\
&& \alpha_j \mu_i\beta_i  \leq \beta_j\mu_i\alpha_i, \quad i
\leq j,  \\
&& {\alpha_j}({\mu_i\beta_i+\mu_j\beta_j) \leq
\beta_j({\mu_i\alpha_i+\mu_j\alpha_j})}, \quad i
\leq j,  \\
&& \frac{\alpha_j}{\mu_i\alpha_i+\mu_j\alpha_j} \leq
\frac{\beta_j}{\mu_i\beta_i+\mu_j\beta_j} , \quad i
\leq j,  \\
&& \sigma_{ij}^{(\alpha)} \leq \sigma_{ij}^{(\beta)}, \quad i \leq
j.
\end{eqnarray*}
We prove the second inequality of
Lemma~{\ref{lemma:dps:sigma_prop}} in a similar way.
\end{proof}

\begin{Lemma}{\label{lemma:dps:T_monot_y_dep}}
If $\alpha$, $\beta$ satisfy (\ref{eq:dps:alpha_beta_compar}),
then
\begin{eqnarray*}
&& \overline{T}^{DPS}(\alpha) \leq \overline{T}^{DPS}(\beta),
\end{eqnarray*}
when the elements of vector $y=\underline{1}'
(E-B^{(\alpha)})^{-1}M$ are such that $y_1 \geq y_2 \geq \ldots
\geq y_M$.
\end{Lemma}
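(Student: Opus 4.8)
The plan is to write the difference $\overline{T}^{DPS}(\beta)-\overline{T}^{DPS}(\alpha)$ as a single bilinear form and to prove it is nonnegative term by term. Throughout I abbreviate $B^{(g)}=A^{(g)}+D^{(g)}$, so that $E-B^{(g)}=E-D^{(g)}-A^{(g)}$, and I let $M=\mathrm{diag}(\rho_1,\dots,\rho_M)$, so that (\ref{eq:dps:T_matr_form_lambda_eq}) reads $\overline{T}^{DPS}(g)=\underline{1}'(E-B^{(g)})^{-1}M\underline{1}\,\lambda^{-1}$. First I would apply the resolvent identity $Q^{-1}-P^{-1}=P^{-1}(P-Q)Q^{-1}$ with $P=E-B^{(\alpha)}$ and $Q=E-B^{(\beta)}$, so that $P-Q=B^{(\beta)}-B^{(\alpha)}=:C$. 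This yields
\[
\overline{T}^{DPS}(\beta)-\overline{T}^{DPS}(\alpha)=\lambda^{-1}\,\underline{1}'(E-B^{(\alpha)})^{-1}\,C\,(E-B^{(\beta)})^{-1}m=\lambda^{-1}\,u'\,C\,v,
\]
where $m=[\rho_1,\dots,\rho_M]^{T}$, $u'=\underline{1}'(E-B^{(\alpha)})^{-1}$ and $v=(E-B^{(\beta)})^{-1}m=\overline{T}^{(\beta)}$. Since the Neumann series $\sum_{k\ge 0}(B^{(g)})^{k}$ converges under the stability assumption, $(E-B^{(g)})^{-1}\ge 0$; hence $u\ge 0$, while $v=\overline{T}^{(\beta)}>0$ as a vector of positive sojourn times. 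Note that the left factor is built from $\alpha$, which is exactly the vector appearing in $y=u'M$.

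Next I would unfold $C$. Writing $\delta_{ij}=\sigma_{ij}^{(\beta)}-\sigma_{ij}^{(\alpha)}$, the off-diagonal entries are $C_{ij}=\delta_{ij}$, while the diagonal entries collect the $D$-contribution, $C_{ii}=\sum_{k}\delta_{ik}$ (the genuinely diagonal term $\sigma_{ii}=1/(2\mu_i)$ is weight-independent and cancels). Substituting this into $u'Cv$ and relabelling the diagonal sum gives, after grouping the pair $(i,j)$ with $(j,i)$,
\[
u'Cv=\sum_{i\neq j}u_i\,\delta_{ij}\,(v_i+v_j)=\sum_{i<j}(v_i+v_j)\bigl(u_i\delta_{ij}+u_j\delta_{ji}\bigr).
\]
By Lemma~\ref{lemma:dps:sigma_prop} (whose hypothesis (\ref{eq:dps:alpha_beta_general_compar}) follows from (\ref{eq:dps:alpha_beta_compar}) via Lemma~\ref{lemma:dps:weights_compar}) we have $\delta_{ij}\ge 0$ for $i<j$. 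The crucial reduction is to tie $\delta_{ji}$ to $\delta_{ij}$: subtracting the two instances of the identity (\ref{eq:dps:eq:sigma_propert}) written for $\beta$ and for $\alpha$ gives $\delta_{ij}/\mu_i+\delta_{ji}/\mu_j=0$, i.e. $\delta_{ji}=-(\mu_j/\mu_i)\,\delta_{ij}$. Hence $u_i\delta_{ij}+u_j\delta_{ji}=(\delta_{ij}/\mu_i)(\mu_iu_i-\mu_ju_j)$.

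Finally I would bring in $y$. Since $y_i=u_i\rho_i=u_i/\mu_i$, we have $\mu_iu_i=\mu_i^{2}y_i$, so $\mu_iu_i-\mu_ju_j=\mu_i^{2}y_i-\mu_j^{2}y_j$. For $i<j$ the enumeration gives $\mu_i\ge\mu_j>0$, and the hypothesis $y_1\ge\dots\ge y_M\ge 0$ gives $y_i\ge y_j\ge 0$; multiplying these two nonnegative orderings yields $\mu_i^{2}y_i\ge\mu_j^{2}y_j$. Together with $\delta_{ij}\ge 0$ and $v_i+v_j>0$, every summand is nonnegative, so $u'Cv\ge 0$ and (\ref{eq:dps:T_alpha_beta_compar}) follows.

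The main obstacle is the middle algebra: the matrix $C$ has mixed signs (nonnegative above the diagonal, nonpositive below, by Lemma~\ref{lemma:dps:sigma_prop}), so the form $u'Cv$ is not termwise nonnegative as written. The decisive step is the pairwise regrouping combined with the exact relation $\delta_{ji}=-(\mu_j/\mu_i)\delta_{ij}$ coming from (\ref{eq:dps:eq:sigma_propert}); this is what converts the indefinite-looking form into a sum of manifestly nonnegative terms, and it is precisely where the monotonicity hypothesis on $y$ (through $\mu_i^{2}y_i\ge\mu_j^{2}y_j$) is consumed. Handling the diagonal $D$-part correctly during the regrouping — so that it merges with the off-diagonal $A$-part rather than being double-counted — is the easiest place to slip.
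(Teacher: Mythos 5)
Your algebra is sound and, modulo bookkeeping, it is essentially the paper's own argument: the paper likewise applies the resolvent identity to write the difference as $\underline{1}'(E-B^{(\alpha)})^{-1}(B^{(\alpha)}-B^{(\beta)})(E-B^{(\beta)})^{-1}[\rho_1,\ldots,\rho_M]^{T}$, and its use of the identity (\ref{eq:dps:eq:sigma_propert}) inside the double sum --- reducing it to $\sum_{i,j}(\sigma_{ij}^{(\alpha)}-\sigma_{ij}^{(\beta)})(y_i-y_j)\mu_i^{-1}\overline{T}_j^{(\beta)}$, which is termwise of one sign by Lemma~\ref{lemma:dps:sigma_prop} --- is exactly your pairing of $(i,j)$ with $(j,i)$ via $\delta_{ji}=-(\mu_j/\mu_i)\delta_{ij}$. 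Your handling of the diagonal of $C$ (noting that $\sigma_{ii}=1/(2\mu_i)$ is weight-independent) is also correct.

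The genuine slip is your identification of $M$. In the paper, $M$ is defined inside this very proof (equation (\ref{eq:dps:y_def})) as $M=\mathrm{diag}(\mu_1,\ldots,\mu_M)$, so the lemma's hypothesis is that $y_i=\mu_i u_i$ is decreasing, where $u'=\underline{1}'(E-B^{(\alpha)})^{-1}$. You instead set $M=\mathrm{diag}(\rho_1,\ldots,\rho_M)$, i.e.\ $y_i=u_i/\mu_i$, and you then need the extra step $\mu_i^{2}y_i\geq\mu_j^{2}y_j$ obtained by multiplying the two orderings. Observe that your own pivotal quantity $\mu_i u_i-\mu_j u_j$ is precisely $y_i-y_j$ for the paper's $y$: with the correct $M$ your final step is immediate and your argument coincides with the paper's. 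As written, however, you have proved the implication under the strictly stronger hypothesis that $u_i/\mu_i$ is decreasing (this implies that $\mu_i u_i$ is decreasing, since $\mu_i^{2}$ is decreasing and $u\geq 0$ by your Neumann-series remark, but not conversely), and that stronger hypothesis is not what Lemma~\ref{lemma:dps:y_inequality} supplies under condition (\ref{eq:dps:condit_satisf}); spliced into the paper, your version of the lemma would therefore fail to chain with the rest of the proof of Theorem~\ref{theorem:dps:T_monot}. The repair is one line: take $M=\mathrm{diag}(\mu_1,\ldots,\mu_M)$, note $\mu_i u_i-\mu_j u_j=y_i-y_j\geq 0$ for $i<j$, and delete the product-of-orderings step.
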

\begin{proof}
Let us denote $B^{(g)}=A^{(g)}+D^{(g)}$, $g=\alpha, \beta$. Then
as (\ref{eq:dps:T_matr_form_lambda_eq})
\begin{eqnarray*} &&
\overline{T}^{DPS}(g)={\lambda^{-1}}\underline{1}'(E-B^{(g)})^{-1}
\left[\rho_1,\ldots,\rho_M\right]^{T}, \quad g=\alpha,\beta.
\end{eqnarray*}
Following the method described in {\cite{KK06}} we get the
following.
\begin{eqnarray*}
\overline{T}^{DPS}(\alpha)-\overline{T}^{DPS}(\beta)&=&
{\lambda^{-1}}\underline{1}'(E-B^{(\alpha)})^{-1}
\left[\rho_1,\ldots,\rho_M\right]^{T}-{\lambda^{-1}}\underline{1}'(E-B^{(\beta)})^{-1}
\left[\rho_1,\ldots,\rho_M\right]^{T}=\\
& = & {\lambda^{-1}}\underline{1}'((E-B^{(\alpha)})^{-1}
-(E-B^{(\beta)})^{-1}) \left[\rho_1,\ldots,\rho_M\right]^{T}=\\
& = & {\lambda^{-1}}\underline{1}'
((E-B^{(\alpha)})^{-1}(B^{(\alpha)}-B^{(\beta)})(E-B^{(\beta)})^{-1})
\left[\rho_1,\ldots,\rho_M\right]^{T}.
\end{eqnarray*}
Let us denote $M$ as a diagonal matrix $M=diag(\mu_1,\ldots,\mu_M)$
and
\begin{eqnarray}
&& y=\underline{1}' (E-B^{(\alpha)})^{-1}M. \label{eq:dps:y_def}
\end{eqnarray}
Then
\begin{eqnarray*}
\overline{T}^{DPS}(\alpha)-\overline{T}^{DPS}(\beta) &=&
\underline{1}' (E-B^{(\alpha)})^{-1}MM^{-1}(B^{(\alpha)}-B^{(\beta)})\overline{T}^{(\beta)}=\\
&=& y M^{-1}(B^{(\alpha)}-B^{(\beta)}) \overline{T}^{(\beta)}=\\
&=& \sum_{i,j}\left({\frac{y_j}{\mu_j}\sigma_{ji}^{(\alpha)}
+\frac{y_i}{\mu_i}\sigma_{ij}^{(\alpha)} -
\left(\frac{y_j}{\mu_j}\sigma_{ji}^{(\beta)}
+\frac{y_i}{\mu_i}\sigma_{ij}^{(\beta)}\right)}\right)\overline{T}_j^{(\beta)}=\\
&=& \sum_{i,j}\left({y_j}\left(
\frac{\sigma_{ji}^{(\alpha)}}{\mu_j}
-\frac{\sigma_{ji}^{(\beta)}}{\mu_j} \right)
+\frac{y_i}{\mu_i}(\sigma_{ij}^{(\alpha)}-\sigma_{ij}^{(\beta)})\right)\overline{T}_j^{(\beta)}.
\end{eqnarray*}
As ({\ref{eq:dps:eq:sigma_propert}}):
\begin{eqnarray*}
&& \frac{\sigma_{ji}^{(g)}}{\mu_j}=\frac{1}{\mu_i \mu_j}-
\frac{\sigma_{ij}^{(g)}}{\mu_i},\quad g=\alpha, \beta,
\end{eqnarray*}
then
\begin{eqnarray*}
\overline{T}^{DPS}(\alpha)-\overline{T}^{DPS}(\beta)  &=&
\sum_{i,j}\left(-y_j\left( \frac{\sigma_{ij}^{(\alpha)}}{\mu_i}
-\frac{\sigma_{ij}^{(\beta)}}{\mu_i}\right)
+\frac{y_i}{\mu_i}(\sigma_{ij}^{(\alpha)}-\sigma_{ij}^{(\beta)})\right)\overline{T}_j^{(\beta)}=\\
&=& \sum_{i,j}\left(-\frac{y_j}{\mu_i}
\left(\sigma_{ij}^{(\alpha)}-\sigma_{ij}^{(\beta)}\right)
+\frac{y_i}{\mu_i}(\sigma_{ij}^{(\alpha)}-\sigma_{ij}^{(\beta)})\right)\overline{T}_j^{(\beta)}=\\
&=& \sum_{i,j}\left
(\left(\sigma_{ij}^{(\alpha)}-\sigma_{ij}^{(\beta)}\right)(y_i-y_j)
\frac{1}{\mu_i}\right)\overline{T}_j^{(\beta)}.
\end{eqnarray*}
Using Lemma {\ref{lemma:dps:sigma_prop}} we get
$\left(\sigma_{ij}^{(1)} -\sigma_{ij}^{(2)}\right)(y_i-y_j)$ is
negative for $i,j=1,\ldots,M$ when $y_1 \geq y_2 \geq \ldots \geq y_M$.
This proves the statement of Lemma~{\ref{lemma:dps:T_monot_y_dep}}.
\end{proof}

\begin{Lemma}{\label{lemma:dps:y_inequality}}
Vector $y$ given by (\ref{eq:dps:y_def}) satisfies
\begin{eqnarray*}
&& y_1 \geq y_2 \geq \ldots \geq y_M,
\end{eqnarray*}
if the following is true:
\begin{eqnarray*}
&&  \frac{\mu_{i+1}}{\mu_{i}} \leq 1-\rho,
\end{eqnarray*}
for every $i=1,\ldots,M$.
\end{Lemma}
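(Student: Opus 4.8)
The plan is to work with the adjoint system rather than with $\overline{T}^{(\alpha)}$ directly. Writing $w=\underline{1}'(E-B^{(\alpha)})^{-1}$, so that $y=wM$ and $y_i=\mu_i w_i$, the row vector $w$ is characterised by $w(E-B^{(\alpha)})=\underline{1}'$. Spelling out the $j$-th scalar equation with $B^{(\alpha)}=A^{(\alpha)}+D^{(\alpha)}$ gives $w_j-\sum_i w_i\sigma_{ij}^{(\alpha)}-w_j\sum_k\sigma_{jk}^{(\alpha)}=1$. Multiplying by $\mu_j$, eliminating $\sigma_{ij}^{(\alpha)}/\mu_i$ by means of the identity \eqref{eq:dps:eq:sigma_propert} of Lemma~\ref{lemma:dps:lemma_sigma_ij}, and using that $\sum_k\sigma_{jk}^{(\alpha)}=\sum_i\sigma_{ji}^{(\alpha)}$, I would collapse everything into the compact self-consistent form
\begin{eqnarray}
&& y_j+\sum_i\sigma_{ji}^{(\alpha)}(y_i-y_j)=\mu_j+S, \qquad j=1,\ldots,M, \nonumber
\end{eqnarray}
where $S=\sum_i y_i/\mu_i$ is a single constant independent of $j$. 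It is on this form, with the cross terms grouped as a weighted sum of the gaps $y_i-y_j$, that the monotonicity can be read off componentwise.

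Next I would compare consecutive indices. Since $\mu_j\geq\mu_{j+1}$ and $\alpha_j\geq\alpha_{j+1}$ one has $\mu_j\alpha_j\geq\mu_{j+1}\alpha_{j+1}$, hence from the definition of $\sigma^{(\alpha)}$ the rows are monotone, $\sigma_{ji}^{(\alpha)}\leq\sigma_{j+1,i}^{(\alpha)}$ for every $i$. Subtracting the $(j+1)$-st equation from the $j$-th and reorganising the cross terms (adding and subtracting $y_j$ inside each difference) yields
\begin{eqnarray}
&& (y_j-y_{j+1})\Big(1-\sum_i\sigma_{j+1,i}^{(\alpha)}\Big)=(\mu_j-\mu_{j+1})+\sum_i(y_i-y_j)\big(\sigma_{j+1,i}^{(\alpha)}-\sigma_{ji}^{(\alpha)}\big). \nonumber
\end{eqnarray}
The leading coefficient is strictly positive: since $\sigma_{j+1,i}^{(\alpha)}<1/\mu_i$ and $\sum_i 1/\mu_i=\rho$ (recall we are in the normalised case $\lambda_i=1$), we get $1-\sum_i\sigma_{j+1,i}^{(\alpha)}>1-\rho>0$.

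It then remains to control the right-hand side, and this is where the argument becomes delicate. Its first term is nonnegative and, crucially, the hypothesis \eqref{eq:dps:condit_satisf} makes it large: $\mu_{j+1}/\mu_j\leq 1-\rho$ is equivalent to $\mu_j-\mu_{j+1}\geq\rho\,\mu_j$. The second term, however, has indefinite sign, because the nonnegative factors $\sigma_{j+1,i}^{(\alpha)}-\sigma_{ji}^{(\alpha)}$ multiply the differences $y_i-y_j$, which are positive for $i<j$ but, under the very monotonicity we are trying to prove, negative for $i>j$. This indefinite coupling to the other, possibly larger, components is the main obstacle: one cannot conclude $y_j\geq y_{j+1}$ from positivity alone. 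The way through is an induction on $j$ carrying, alongside $y_1\geq\cdots\geq y_j$, two-sided quantitative bounds on the gaps $y_i-y_j$; bounding $\sigma_{j+1,i}^{(\alpha)}-\sigma_{ji}^{(\alpha)}$ by its explicit value $\alpha_i(\mu_j\alpha_j-\mu_{j+1}\alpha_{j+1})/[(\mu_{j+1}\alpha_{j+1}+\mu_i\alpha_i)(\mu_j\alpha_j+\mu_i\alpha_i)]$ and summing, the restoring gap $\mu_j-\mu_{j+1}\geq\rho\,\mu_j$ furnished by \eqref{eq:dps:condit_satisf} is exactly what is needed to dominate the negative part of the correction and force $y_j-y_{j+1}\geq 0$. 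Verifying this domination for every $j$ is where the bulk of the technical, appendix-level estimates go, and it is the only place where condition \eqref{eq:dps:condit_satisf} is genuinely used.
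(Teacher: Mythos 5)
Your derivation up to the difference identity is correct and clean: the adjoint equation $w(E-B^{(\alpha)})=\underline{1}'$, the compact form $y_j+\sum_i\sigma_{ji}^{(\alpha)}(y_i-y_j)=\mu_j+S$, and the consequence
\begin{equation*}
(y_j-y_{j+1})\Bigl(1-\sum_i\sigma_{j+1,i}^{(\alpha)}\Bigr)=(\mu_j-\mu_{j+1})+\sum_i(y_i-y_j)\bigl(\sigma_{j+1,i}^{(\alpha)}-\sigma_{ji}^{(\alpha)}\bigr)
\end{equation*}
all check out, as does the lower bound $1-\sum_i\sigma_{j+1,i}^{(\alpha)}>1-\rho$. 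But the proof stops exactly where the difficulty begins. You correctly identify that the correction term is indefinite — for $i>j$ the factors $y_i-y_j$ are negative precisely under the ordering being proved — and then you merely assert that an induction on $j$ ``carrying two-sided quantitative bounds on the gaps'' dominates this negative part using $\mu_j-\mu_{j+1}\geq\rho\,\mu_j$. Those bounds are never exhibited, and the proposed induction is circular as stated: to prove $y_j\geq y_{j+1}$ your identity requires control of $y_j-y_i$ for \emph{all} $i>j+1$, i.e.\ of gaps the forward induction has not yet produced. You would need independent a priori upper bounds on those gaps (uniform in $j$, and summable against $\sigma_{j+1,i}^{(\alpha)}-\sigma_{ji}^{(\alpha)}$ to something below $\rho\,\mu_j$), and nothing in the proposal supplies them or argues they exist. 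So this is a genuine gap, not a routine verification: the ``bulk of the technical estimates'' you defer is the entire content of the lemma.

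It is instructive that the paper avoids this circularity by a structurally different mechanism. It rewrites the fixed-point relation as $y=\tilde{\mu}+y\tilde{A}$ with $\tilde{\mu}=\mu^{T}(E-D)^{-1}$ and $\tilde{A}=M^{-1}AM(E-D)^{-1}$, shows $\tilde{A}$ is a positive contraction (Lemma~\ref{lemma:dps:tilde_A_contraction}), so that the iterates $y^{(n)}=\tilde{\mu}+y^{(n-1)}\tilde{A}$ starting from $y^{(0)}=0$ converge to $y$ (Lemma~\ref{lemma:dps:y_n_to_y}), and then inducts on the iteration index $n$ rather than on the component index $j$: at step $n$ the full ordering of $y^{(n-1)}$ is available as the hypothesis, an Abel (summation-by-parts) rearrangement reduces the claim to the monotonicity of the \emph{partial column sums} $\sum_{i=1}^{r}\tilde{A}_{ij}$ in $j$ (Lemma~\ref{lemma:tilde_A_inequal}, which needs no condition on the means) together with $\tilde{\mu}_1\geq\cdots\geq\tilde{\mu}_M$ (Lemma~\ref{lemma:tilde_mu_inequal}), and it is only in this last lemma that condition~(\ref{eq:dps:condit_satisf}) enters. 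If you want to salvage your route, the missing step is to replace the induction on $j$ by some device — iteration, or a separately proved a priori gap estimate — that breaks the dependence on the not-yet-ordered components; as written, the argument does not close.
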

\begin{proof}
The proof could be found in the appendix.
\end{proof}

\begin{Remark}\label{remark:dps:weight_sel_for_y}
For the job classes such as $\frac{\mu_{i+1}}{\mu_{i}} > 1-\rho$
we prove that to make $y_i \geq y_{i+1}$ it is sufficient to set
the weights of these classes equal, ${\alpha_{i+1}=\alpha_i}$.
\end{Remark}

Combining the results of Lemmas~{\ref{lemma:dps:weights_compar}},
{\ref{lemma:dps:sigma_prop}}, {\ref{lemma:dps:T_monot_y_dep}} and
{\ref{lemma:dps:y_inequality}} we prove the statement of the
Theorem~{\ref{theorem:dps:T_monot}}.
Remark~\ref{remark:dps:weight_sel_for_y} gives the
Remark~{\ref{remark:dps:weight_sel}}  after
Theorem~{\ref{theorem:dps:T_monot}}.

\begin{Proposition}
{\label{propos:dps:extension}}
The result of  Theorem~{\ref{theorem:dps:T_monot}} is
extended to the case when $\lambda_i \neq 1$.
\end{Proposition}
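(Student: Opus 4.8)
The plan is to deduce the general statement from the already-established unit-rate case by a class-splitting argument, so that Theorem~\ref{theorem:dps:T_monot} is reused as a black box rather than re-proved. The point is that the instantaneous rate allotted to a job under DPS depends only on its own weight and on the aggregate weighted population $\sum_k g_k N_k$. Consequently, replacing a single class $i$ with parameters $(\lambda_i,\mu_i,g_i)$ by $n_i$ statistically identical subclasses, each of arrival rate $\lambda_i/n_i$, service rate $\mu_i$ and weight $g_i$, leaves the joint dynamics of the populations and the sojourn-time distribution of every individual job unchanged. Since every subclass of the original class $i$ then has the same per-class mean $\overline{T}_i$, the aggregate $\overline{T}^{DPS}=\sum_k (\lambda_k/\lambda)\overline{T}_k$ is identical for the split and the unsplit system. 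I would state this invariance as a preliminary lemma, since it is the real content of the reduction.

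Next, assuming first that all $\lambda_i$ (and, if needed, all $\mu_i$) are rational, I would choose the $n_i$ proportional to $\lambda_i$ so that all subclasses share a common arrival rate, and then rescale every rate so that this common rate equals $1$. Rescaling all $\lambda$'s and $\mu$'s by a common factor only multiplies every sojourn time by a constant and leaves $\rho$, the weight ratios in (\ref{eq:dps:alpha_beta_compar}) and the ratio condition (\ref{eq:dps:condit_satisf}) untouched, so it preserves both the desired inequality and all hypotheses. The resulting split system is a legitimate unit-rate DPS system whose means satisfy $\mu_1\ge\cdots\ge\mu_M$ with equalities occurring exactly inside each group of subclasses, and whose weight vectors inherit the orderings of $\alpha$ and $\beta$.

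The verification at the group boundaries is where the argument must be made carefully. Across a boundary the relevant instances of (\ref{eq:dps:alpha_beta_compar}) and (\ref{eq:dps:condit_satisf}) are inherited directly from the original system, while inside a group the weight ratio is $1=1$, so (\ref{eq:dps:alpha_beta_compar}) holds with equality. Inside a group the mean ratio equals $1>1-\rho$, so (\ref{eq:dps:condit_satisf}) fails; but there all subclass weights are equal, hence Remark~\ref{remark:dps:weight_sel} applies and Theorem~\ref{theorem:dps:T_monot} still yields $\overline{T}^{DPS}(\alpha)\le\overline{T}^{DPS}(\beta)$ for the split system. Undoing the splitting and the rescaling then transfers this inequality back to the original rational-rate system.

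Finally I would remove the rationality assumption by continuity: solving the linear system (\ref{eq:dps:T_k_system}) exhibits $\overline{T}^{DPS}(g)$ as a rational, hence continuous, function of $(\lambda_i,\mu_i)$ on the stability region, so approximating the given rates by rationals while respecting $\rho<1$ and (\ref{eq:dps:condit_satisf}) lets the inequality pass to the limit. The main obstacle I anticipate is precisely this boundary bookkeeping, namely checking that the split weight vectors still lie in $G$, that (\ref{eq:dps:alpha_beta_compar}) survives at each junction, and that the appeal to Remark~\ref{remark:dps:weight_sel} for the within-group equal-mean subclasses is legitimate, together with arranging the rational approximation so that (\ref{eq:dps:condit_satisf}) holds along the whole sequence; the algebra of the reduction itself is routine.
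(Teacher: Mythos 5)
Your proposal is correct and follows essentially the same route as the paper: split each class into subclasses of common rational arrival rate, rescale time to reduce to the unit-rate case of Theorem~\ref{theorem:dps:T_monot}, and pass to real rates by continuity of the solution of (\ref{eq:dps:T_k_system}). In fact you are more careful than the paper at one point it glosses over --- within a split group the means are equal, so (\ref{eq:dps:condit_satisf}) fails there, and your appeal to the equal-weights escape of Remark~\ref{remark:dps:weight_sel} (justified by Remark~\ref{remark:dps:weight_sel_for_y} and the $\mu_j=\mu_p$, $g_j=g_p$ case in Lemma~\ref{lemma:tilde_mu_inequal}) is exactly what is needed to make the reduction rigorous.
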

\begin{proof}
Let us first consider the case when all $\lambda_i=q$, $i=1,\ldots,M$. It can be shown that
for this case the proof of
Theorem~{\ref{theorem:dps:T_monot}} is equivalent to the proof of
the same Theorem but for the new system with
$\lambda_i^{*}=1$,  $\mu_i^{*}=q \mu_i$, $i=1,\ldots,M$.
For this new system the results of Theorem~{\ref{theorem:dps:T_monot}} is
evidently true and restriction
({\ref{eq:dps:condit_satisf}}) is not changed. Then,
Theorem~{\ref{theorem:dps:T_monot}} is true for the initial system
as well.

If $\lambda_i$ are rational, then they could be written in
$\lambda_i=\frac{p_i}{q}$, where $p_i$ and $q$ are positive
integers. Then each class can be presented as $p_i$ classes with
equal means $1/\mu_i$ and intensity $1/q$. So, the DPS system can
be considered as the DPS system with $p_1+\ldots+p_K$ classes with
the same arrival rates $1/q$. The result of
Theorem~{\ref{theorem:dps:T_monot}} is extended on this case.

If $\lambda_i$, $i=1,\ldots,M$ are positive and real we apply the previous case of
rational $\lambda_i$ and use continuity.
\end{proof}

\section{Numerical results} {\label{sec:dps:numer_res}}

Let us consider a DPS system with $3$ classes. Let us consider the
set of normalized weights vectors $g(x)=(g_1(x),g_2(x),g_3(x))$ ,
$\sum_{i=1}^3{g_i(x)}=1$, $g_i(x)
={{x^{-i}/(\sum_{i=1}^3{{x^{-i}}}})}$, $x>1$. Every point $x>1$
denotes a weight vector. Vectors $g(x), g(y)$ satisfy property
(\ref{eq:dps:alpha_beta_compar}) when $1<y \leq x$, namely $
{g_{i+1}(x)}/{g_i(x)} \leq {g_{i+1}(y)}/{g_i(y)}$, $i=1,2$, $1<y
\leq x$. On Figures~{\ref{fig:compar_weights_sat}},~{\ref{fig:compar_weights_not_sat}} we plot
$\overline{T}^{DPS}$ with weights vectors $g(x)$ as a function of
$x$, the expected sojourn times $\overline{T}^{PS}$ for the PS policy and
$\overline{T}^{opt}$ for the optimal $c \mu$-rule policy.

On Figure~{\ref{fig:compar_weights_sat}} we plot the expected sojourn time for the 
case when condition  ({\ref{eq:dps:condit_satisf}}) is satisfied for three
classes. The parameters are: $\lambda_i=1$, $i=1,2,3$,
$\mu_1=160$, $\mu_2=14$, $\mu_3=1.2$, then $\rho=0.911$.
On Figure~{\ref{fig:compar_weights_not_sat}} we plot the expected sojourn time for the 
case when condition  ({\ref{eq:dps:condit_satisf}}) is \textit{not} satisfied for three
classes. The parameters are: $\lambda_i=1$, $i=1,2,3$,
$\mu_1=3.5$, $\mu_2=3.2$, $\mu_3=3.1$, then $\rho=0.92$. 
One can see that $\overline{T}^{DPS}(g(x)) \leq \overline{T}^{DPS}(g(y))$, $1<y \leq x$ even when the restriction ({\ref{eq:dps:condit_satisf}}) is not satisfied. 

 \begin{figure}
   \begin{minipage}{0.48\linewidth}
        \centering {\epsfxsize=2.5 in \epsfbox{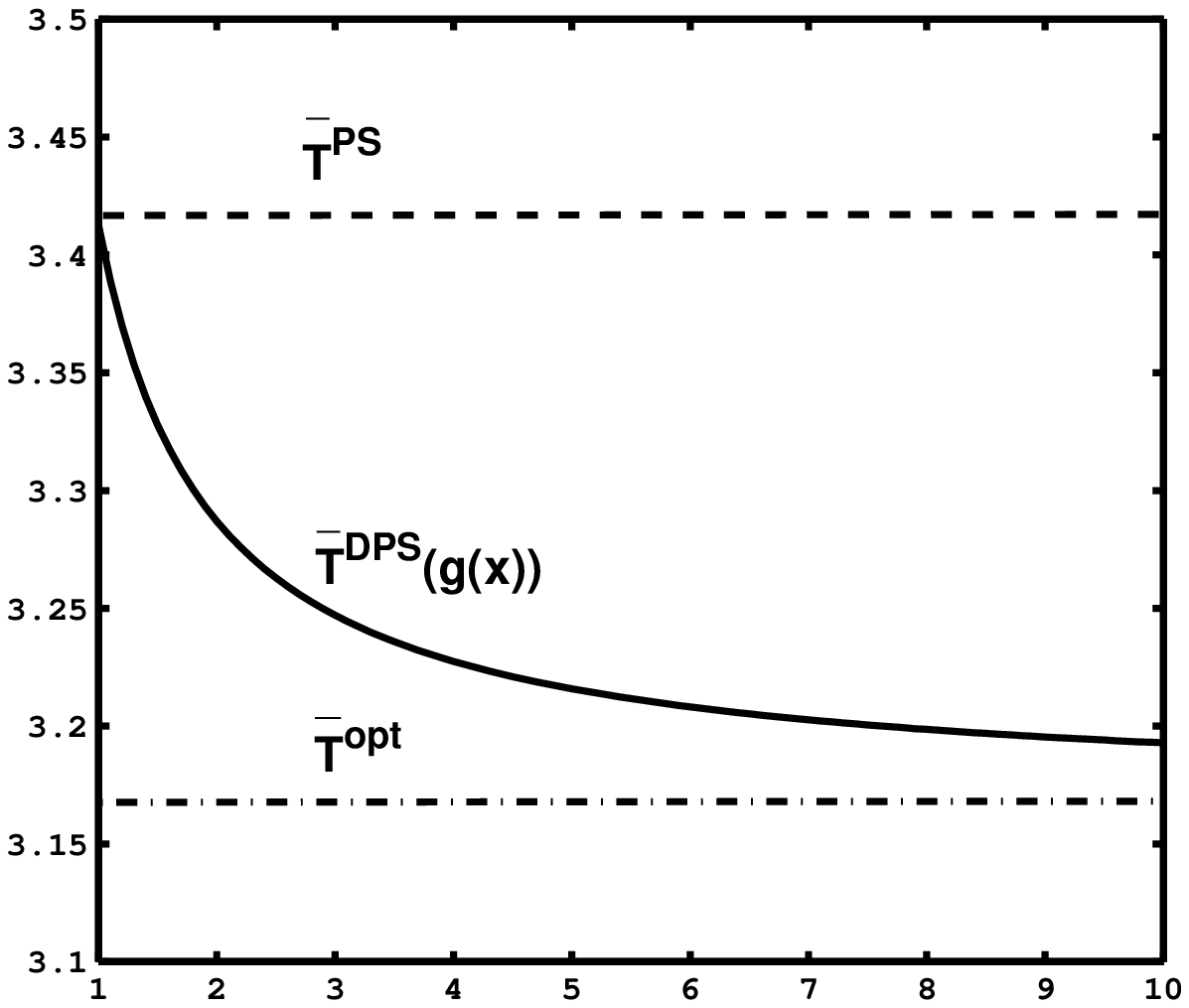}}
        \caption{$\overline{T}^{DPS}(g(x))$, $\overline{T}^{PS}$,  $\overline{T}^{opt}$ functions, condition satisfied.}{\label{fig:compar_weights_sat}}
   \end{minipage}
   \hfill
   \begin{minipage}{0.48\linewidth}
        \centering {\epsfxsize=2.5 in \epsfbox{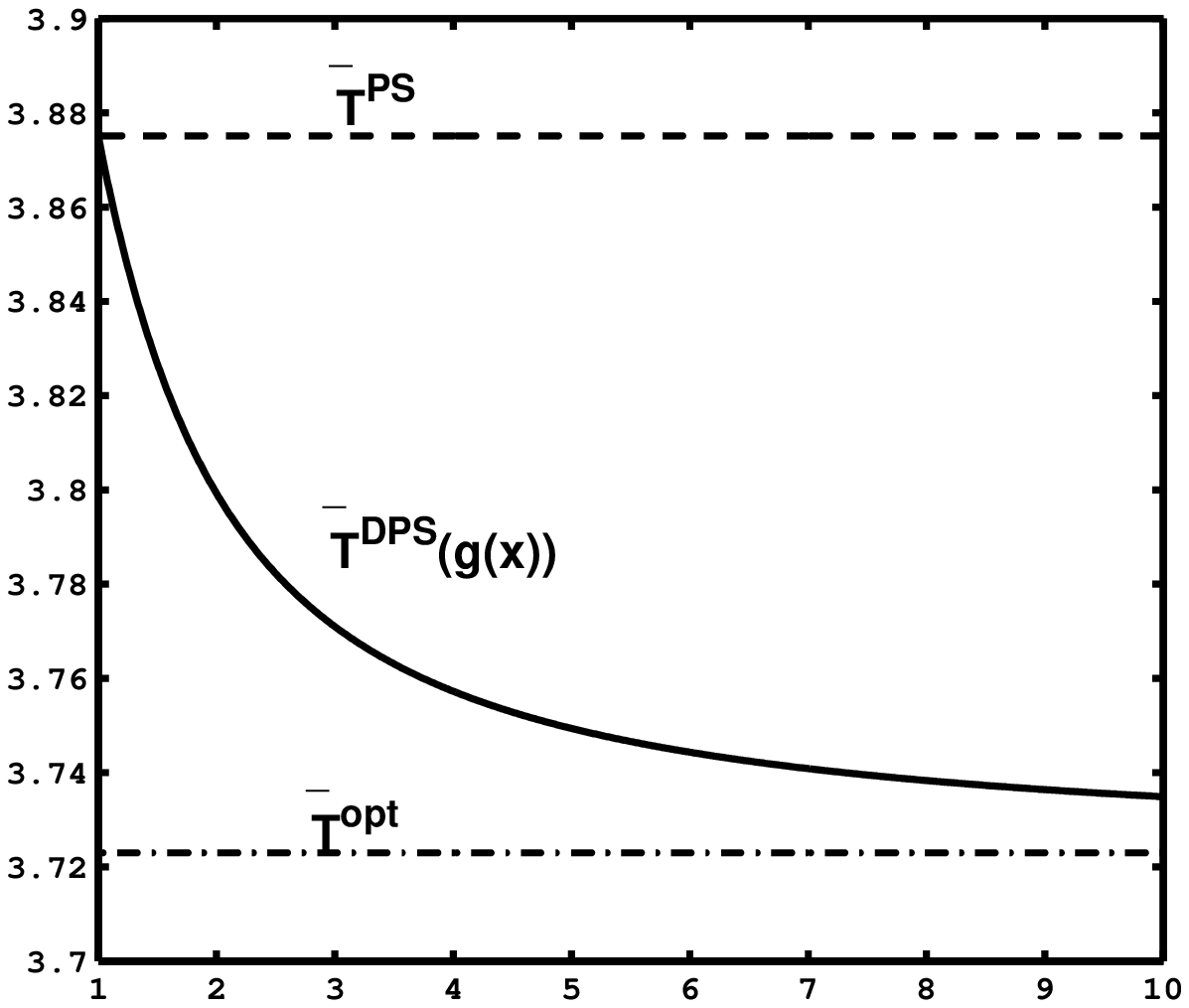}}
 \caption{$\overline{T}^{DPS}(g(x))$, $\overline{T}^{PS}$,  $\overline{T}^{opt}$ functions,  condition not satisfied}{\label{fig:compar_weights_not_sat}}
   \end{minipage}
 \end{figure}

\section{Conclusion}

We study the DPS policy with exponential job size distributions.
One of the main problems studying DPS is the expected sojourn time
minimization according to the weighs selection. In the present
paper we compare two DPS policies with different weights. We show
that the expected sojourn time is smaller for the policy with the
weigh vector closer to the optimal policy vector, provided by $c \mu$-rule.  
So, we prove the monotonicity of the expected sojourn
time for the DPS policy according to the weight vector selection.

The result is proved with some restrictions on system parameters.
The found restrictions on the system parameters are such that the
result is true for systems such as the mean values of the job
class  size distributions are very different from each other. We
found, that to prove the main result it is sufficient to give the
same weights to the classes with similar means. The found
restriction is a sufficient and not a necessary condition on a
system parameters. When the load of the system decreases, the
condition becomes less strict.

\section*{Acknowledgment}
I would like to thank K. Avrachenkov, P. Brown and U. Ayesta for
fruitful discussions and suggestions.

\section{Appendix}

In the following proof in the notations we do not use  the
dependency of the parameters on $g$ to simplify the notations. We
consider that vector $g \in G$, or $g_1 \geq g_2 \ldots \geq g_M$.
To simplify the notations let us use $\sum_k$ instead of $\sum_{k=1}^M$.

\textbf{Lemma~{\ref{lemma:dps:y_inequality}}.} Vector
$y=\underline{1}' (E-B)^{-1}M $ satisfies
\begin{eqnarray*}
&& y_1 \geq y_2 \geq \ldots \geq y_M,
\end{eqnarray*}
if the following is true:
\begin{eqnarray*}
&&  \frac{\mu_{i+1}}{\mu_{i}} \leq 1-\rho,
\end{eqnarray*}
for every $i=1,\ldots,M$.
\begin{proof} Using the results of the following Lemmas we prove
the statement of Lemma~{\ref{lemma:dps:y_inequality}} and give the
proof for Remark~\ref{remark:dps:weight_sel_for_y}.
\end{proof}


Let us give the following notations
\begin{eqnarray}
&& \tilde{\mu}={\mu}^{T}(E-D)^{-1}, \label{eq:dps:mu_tilde_def}\\
&& \tilde{A} = M^{-1}A M (E-D)^{-1}.\label{eq:dps:A_tilde_def}
\end{eqnarray}

Let us notice the following
\begin{eqnarray*}
(E-D)^{-1}_j &  = & \frac{1}{{1-\sum_{k} {\frac{g_k}{\mu_j g_j+
\mu_k g_k}}}} =\frac{1}{{1-\rho + \sum_{k} {\frac{\mu_j
g_j}{\mu_k(\mu_j g_j+\mu_k g_k)}}}} >0, \quad  j=1,\ldots,M, \\
{\tilde{A}_{ij}} & = & \frac{{\frac{ \mu_j g_j}{\mu_i(\mu_i g_i+
\mu_j g_j)}}}{1-\sum_{k} {\frac{g_k}{\mu_j g_j+ \mu_k g_k}}} =
\frac{{\frac{ \mu_j g_j}{\mu_i(\mu_i g_i+ \mu_j g_j)}}} {{{1-\rho
+ \sum_{k}{\frac{\mu_j g_j}{\mu_k(\mu_j g_j+\mu_k g_k)}}}}} > 0,
\quad i,j=1,\ldots,M
\end{eqnarray*}
Let us give the following notation
\begin{eqnarray*}
&& f(x) = \sum_{k}{\frac{x}{\mu_k(x+\mu_k g_k)}}.
\end{eqnarray*}
Then
\begin{eqnarray*}
(E-D)^{-1}_j & =& \frac{1}{{1-\rho + f(\mu_j g_j)}}, \quad
j=1,\ldots,M ,\\
{\tilde{A}_{ij}} &=& {\frac{ \mu_j g_j}{\mu_i(\mu_i g_i+ \mu_j
g_j)(1-\rho + f(\mu_j g_j))}}, \quad i,j=1,\ldots,M.
\end{eqnarray*}
Let us first prove additional Lemma.
\begin{Lemma}{\label{lemma:dps:tilde_A_contraction}}
Matrix
\begin{eqnarray*}
&& \tilde{A}=M^{-1}A M (E-D)^{-1}
\end{eqnarray*}
is a positive contraction.
\end{Lemma}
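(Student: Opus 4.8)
The plan is to read off positivity directly from the entrywise formula already derived, and then to establish the contraction property by bounding the appropriate operator norm. Since every entry
\[
\tilde{A}_{ij}=\frac{\mu_j g_j}{\mu_i(\mu_i g_i+\mu_j g_j)(1-\rho+f(\mu_j g_j))}
\]
is a ratio of strictly positive quantities (here $1-\rho>0$ by the stability assumption $\rho<1$, and $f(\mu_j g_j)>0$ since it is a sum of positive terms), the matrix $\tilde{A}$ is positive, so that half of the statement requires no further work.

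For the contraction property I would work with the $\ell_1$ operator norm, i.e.\ the maximum over columns of the absolute column sums; the point is that this is exactly the norm for which the computation collapses. Fixing a column index $j$ and summing the entries down that column gives
\[
\sum_i \tilde{A}_{ij}=\frac{\mu_j g_j}{1-\rho+f(\mu_j g_j)}\sum_i \frac{1}{\mu_i(\mu_i g_i+\mu_j g_j)},
\]
and the remaining sum is recognised, straight from the definition $f(x)=\sum_k \frac{x}{\mu_k(x+\mu_k g_k)}$ evaluated at $x=\mu_j g_j$, as $\sum_i \frac{\mu_j g_j}{\mu_i(\mu_i g_i+\mu_j g_j)}=f(\mu_j g_j)$.

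Substituting, each column sum equals
\[
\sum_i \tilde{A}_{ij}=\frac{f(\mu_j g_j)}{1-\rho+f(\mu_j g_j)}<1,
\]
the strict inequality holding because $1-\rho>0$ and $f(\mu_j g_j)>0$. Taking the maximum over $j$ yields $\|\tilde{A}\|_1<1$, which is the desired contraction; in particular the spectral radius of $\tilde{A}$ lies strictly below $1$, so that $E-\tilde{A}$ is invertible with $(E-\tilde{A})^{-1}=\sum_{n\geq 0}\tilde{A}^n$ a convergent series of positive matrices, which is presumably what the subsequent argument for $y$ needs.

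The one genuinely delicate point is the choice of norm. The row sums of $\tilde{A}$ do not telescope to anything comparably clean, because the factor $1-\rho+f(\mu_j g_j)$ in the denominator depends on the column index $j$ rather than the row index $i$. It is precisely this $j$-dependence that makes the column sums coincide with $f(\mu_j g_j)$ and collapse to $f/(1-\rho+f)$, so identifying the column ($\ell_1$) norm as the right instrument is the main idea; after that the argument is a one-line substitution, with the stability bound $\rho<1$ supplying the strict inequality.
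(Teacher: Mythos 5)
Your proof is correct and takes essentially the same approach as the paper: the paper also works with the $\ell_1$-type norm $\|X\|=\underline{1}'X$ on the nonnegative cone, collapses the column sums to $\sum_i \tilde{A}_{ij}=f(\mu_j g_j)/\bigl(1-\rho+f(\mu_j g_j)\bigr)$, and extracts the uniform constant $q=1-(1-\rho)\delta$ with $\delta=1/\bigl(1-\rho+\max_j f(\mu_j g_j)\bigr)$, which is exactly your maximal column sum. The only cosmetic difference is that you phrase the bound as $\|\tilde{A}\|_1<1$ directly, whereas the paper verifies $\underline{1}'\tilde{A}X\leq q\,\underline{1}'X$ for $X$ in the cone.
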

\begin{proof}
Matrix $\tilde{A}$ is a positive operator as elements of matrices
$M$ and $A$ are positive and elements of matrix $(E-D)^{-1}$ are
positive. Let $\Omega=\{X | x_i\geq0, \,\, i=1,\ldots,M\}$. If $X
\in \Omega$, then $\tilde{A} X \in \Omega$. Then to prove that
matrix $\tilde{A}$ is a contraction it is enough to show that
\begin{eqnarray} {\label{eq:dps:contraction_condit}}
&& \exists \, q,\quad  0<q<1, \quad ||\tilde{A}X|| \leq q ||X||,
\quad \forall\,\, X \in \Omega.
\end{eqnarray}
As $X \in \Omega$, then we can take
$||X||=\underline{1}'X=\sum_{i}x_i$. Then
\begin{eqnarray*}
\underline{1}' \tilde{A} X &=& \sum_{j} x_j \sum_{i}
\tilde{A}_{ij} =\sum_{j} x_j \frac{\sum_{i} {\frac{ \mu_j
g_j} {\mu_i( \mu_j g_j+ \mu_i g_i)}}}{(1-\rho+g(\mu_j g_j))} = \\
& =& \sum_j x_j \frac{f(\mu_j g_j)}{1-\rho +f(\mu_j g_j)} =\sum_j
x_j \left(1-\frac{1-\rho}{1-\rho +f(\mu_j g_j)}\right)=\\
&=& \sum_j x_j - (1-\rho) \sum_j \frac{x_j}{1-\rho +f(\mu_j g_j)}.
\end{eqnarray*}
Let us find the value of $q$, which satisfies condition
({\ref{eq:dps:contraction_condit}}).
\begin{eqnarray*}
&& \underline{1}' \tilde{A} X  \leq \underline{1}'X,\\
&&  \sum_j x_j - (1-\rho) \sum_j
\frac{x_j}{1-\rho +f(\mu_j g_j)} \leq q \sum_j{x_j} \\
&& 1-(1-\rho) \frac{\sum_j \frac{x_j}{1-\rho +f(\mu_j
g_j)}}{\sum_j{x_j}} \leq q.
\end{eqnarray*}
As $f(\mu_j g_j)>0$ then
\begin{eqnarray*}
&& 0 < 1-(1-\rho) \frac{\sum_j \frac{x_j}{1-\rho +f(\mu_j
g_j)}}{\sum_j{x_j}} <1.
\end{eqnarray*}
Let us define $\delta$ in the following way:
\begin{eqnarray*}
&& \delta = \frac{1}{1-\rho+\max_j{f(\mu_j g_j)}} < \frac{\sum_j \frac{x_j}{1-\rho +f(\mu_j g_j)}}{\sum_j{x_j}}.
\end{eqnarray*}
Then
\begin{eqnarray*}
&& 1-(1-\rho) \frac{\sum_j \frac{x_j}{1-\rho +f(\mu_j g_j)}}{\sum_j{x_j}} < 1-(1-\rho)\delta.
\end{eqnarray*}
Let us notice that $\max_j{f(\mu_j g_j)}$ always exists as the
values of $\mu_j g_j$, $j=1,\ldots,M$ are finite. Then we can select
\begin{eqnarray*}
&& q=1-(1-\rho)\delta, \quad 0<q<1.
\end{eqnarray*}
Which completes the proof.
\end{proof}

\begin{Lemma}{\label{lemma:dps:y_n_to_y}}
If
\begin{eqnarray}
&& y_1^{(0)}=[0,\ldots,0],\label{eq:dps:y_n_zero}\\
&& y^{(n)}=\tilde{\mu}+y^{(n-1)} \tilde{A}, \quad n=1,2,\ldots ,
\label{eq:dps:y_n_th}
\end{eqnarray}
then $y^{(n)}\rightarrow y$, when $n \rightarrow \infty$.
\end{Lemma}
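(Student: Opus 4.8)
The plan is to recognize (\ref{eq:dps:y_n_th}) as an affine fixed-point iteration and to apply the contraction property established in Lemma~\ref{lemma:dps:tilde_A_contraction}. Write the iteration map as $\Phi(v)=\tilde{\mu}+v\tilde{A}$ acting on row vectors $v\in\mathbb{R}^M$. Lemma~\ref{lemma:dps:tilde_A_contraction} shows that each column sum $\sum_i\tilde{A}_{ij}=f(\mu_jg_j)/(1-\rho+f(\mu_jg_j))$ is bounded by $q<1$; since all entries of $\tilde{A}$ are nonnegative, this gives $\|v\tilde{A}\|_\infty=\max_j|\sum_i v_i\tilde{A}_{ij}|\le q\|v\|_\infty$ for every row vector $v$. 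Hence $\Phi$ is a contraction on $(\mathbb{R}^M,\|\cdot\|_\infty)$, and by the Banach fixed-point theorem the sequence $y^{(n)}$ defined by (\ref{eq:dps:y_n_zero}) and (\ref{eq:dps:y_n_th}) converges, from the starting point $y^{(0)}=[0,\dots,0]$ or from any other, to the unique fixed point $y^{*}$ solving $y^{*}=\tilde{\mu}+y^{*}\tilde{A}$.

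It then remains to identify this fixed point with the vector $y=\underline{1}'(E-B)^{-1}M$ of (\ref{eq:dps:y_def}), where $B=A+D$; this is the substantive step, and I would carry it out by verifying directly that $y$ solves the fixed-point equation. Using $\mu^{T}=\underline{1}'M$ together with definition (\ref{eq:dps:mu_tilde_def}), write $\tilde{\mu}=\underline{1}'M(E-D)^{-1}$, and with (\ref{eq:dps:A_tilde_def}) compute
\begin{eqnarray*}
\tilde{\mu}+y\tilde{A}
&=&\underline{1}'M(E-D)^{-1}+\underline{1}'(E-B)^{-1}AM(E-D)^{-1}\\
&=&\underline{1}'(E-B)^{-1}\bigl[(E-B)M+AM\bigr](E-D)^{-1}\\
&=&\underline{1}'(E-B)^{-1}(E-B+A)M(E-D)^{-1}.
\end{eqnarray*}
Since $B=A+D$ yields $E-B+A=E-D$, and since $M$ and $E-D$ are both diagonal and therefore commute, one has $(E-D)M(E-D)^{-1}=M$, so the right-hand side collapses to $\underline{1}'(E-B)^{-1}M=y$. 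Thus $y=\tilde{\mu}+y\tilde{A}$, and by uniqueness of the fixed point $y^{*}=y$. Combined with the convergence from the first paragraph this gives $y^{(n)}\to y$.

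The one point requiring care — and the step I expect to be the main obstacle — is matching the direction of the contraction bound to the side on which $\tilde{A}$ acts. Lemma~\ref{lemma:dps:tilde_A_contraction} is phrased for the left action $X\mapsto\tilde{A}X$ on the cone $\Omega$ with the norm $\|X\|=\underline{1}'X$, which is in substance a bound on the \emph{column} sums of $\tilde{A}$; the iteration (\ref{eq:dps:y_n_th}), by contrast, acts on the right, $v\mapsto v\tilde{A}$. The reconciliation is exactly the observation used above, namely that the column-sum bound $\sum_i\tilde{A}_{ij}\le q$ furnishes an $\ell_\infty$ contraction for the right action on all of $\mathbb{R}^M$, so that the successive differences $y^{(n+1)}-y^{(n)}=(y^{(n)}-y^{(n-1)})\tilde{A}$ contract geometrically even though they need not lie in $\Omega$. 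Once this is noted, the convergence and the identification of the limit combine to complete the proof.
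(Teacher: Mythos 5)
Your proof is correct and takes essentially the same route as the paper: both establish the fixed-point identity $y=\tilde{\mu}+y\tilde{A}$ by the same algebra (the paper rearranges $y=\underline{1}'(E-B)^{-1}M$ into the identity, you verify it by collapsing $\tilde{\mu}+y\tilde{A}$ back to $y$) and then conclude convergence of the iteration from the contraction property of Lemma~\ref{lemma:dps:tilde_A_contraction}. Your explicit reconciliation of the left action $X\mapsto\tilde{A}X$ on the cone (a column-sum bound $\sum_i\tilde{A}_{ij}\le q<1$) with an $\ell_\infty$ contraction for the right action $v\mapsto v\tilde{A}$ fills in a step the paper leaves implicit, but it does not change the argument.
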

\begin{proof}
Let us present $y$ in the following way. As $B=E-A-D$, then
\begin{eqnarray*}
&& y=\underline{1}(E-B)^{-1}M,\\
&& y M^{-1}(E-D-A)=\underline{1},\\
&& y M^{-1}(E-D)= -y M^{-1}A+\underline{1},\\
&& y (E-D)^{-1}M= -y M^{-1}A(E-D)^{-1}M+\underline{1}(E-D)^{-1}M.
\end{eqnarray*}
As matrixes $D$ and $M$ are diagonal, the $MD=DM$ and then
\begin{eqnarray*}
&& y={\mu}^{T}(E-D)^{-1}+y M^{-1}A M (E-D)^{-1},
\end{eqnarray*}
where $\mu=[\mu_1,\ldots,\mu_M]$. According to notations
(\ref{eq:dps:mu_tilde_def}) and (\ref{eq:dps:A_tilde_def}) we have
the following
\begin{eqnarray*}
&& y=\tilde{\mu}+y \tilde{A}.
\end{eqnarray*}
Let us denote \mbox{$y^{(n)}=[y_1^{(n)},\ldots,y_1^{(n)}]$},
$n=0,1,2,\ldots$ and let define $y_1^{(0)}$ and
$y^{(n)}$ by (\ref{eq:dps:y_n_zero}) and ( \ref{eq:dps:y_n_th}). According to
Lemma~{\ref{lemma:dps:tilde_A_contraction}} reflection $\tilde{A}$
is a positive reflection and is a contraction. Also
$\tilde{\mu}_i$ are positive. Then $y^{(n)}\rightarrow y$, when $n
\rightarrow \infty$ and we prove the statement of Lemma~{\ref{lemma:dps:y_n_to_y}}.
\end{proof}

\begin{Lemma} {\label{lemma:dps:y_n_decreases}}
Let $y^{(n)}$ is defined by (\ref{eq:dps:y_n_th}) and
$y^{(0)}$  is given by (\ref{eq:dps:y_n_zero}), then
\begin{eqnarray} {\label{eq:dps:y_i_inequality}}
&& y_1^{(n)} \geq y_2^{(n)} \geq \ldots\geq y_M^{(n)}, \quad
n=1,2,\ldots
\end{eqnarray}
if $\frac{\mu_{i+1}}{\mu_{i}} \leq 1-\rho$ for every
$i=1,\ldots,M$.
\end{Lemma}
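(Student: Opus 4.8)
The plan is to prove (\ref{eq:dps:y_i_inequality}) by induction on $n$, using the recursion (\ref{eq:dps:y_n_th}) together with the facts, guaranteed by Lemma~\ref{lemma:dps:tilde_A_contraction}, that $\tilde{\mu}$ and $\tilde{A}$ have nonnegative entries; consequently every iterate $y^{(n)}$ is nonnegative. Writing out one consecutive difference,
\[
y^{(n)}_j-y^{(n)}_{j+1}=(\tilde{\mu}_j-\tilde{\mu}_{j+1})+\sum_i y^{(n-1)}_i\,(\tilde{A}_{ij}-\tilde{A}_{i,j+1}),
\]
I will isolate two ingredients: the monotonicity of the constant vector, $\tilde{\mu}_j\geq\tilde{\mu}_{j+1}$, and a monotonicity-preservation property of the linear map $v\mapsto v\tilde{A}$.

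For the base case $n=1$ we have $y^{(1)}=\tilde{\mu}$ (since $y^{(0)}=0$), so it suffices to show $\tilde{\mu}_j\geq\tilde{\mu}_{j+1}$, i.e. $\frac{\mu_j}{1-\rho+f(\mu_j g_j)}\geq\frac{\mu_{j+1}}{1-\rho+f(\mu_{j+1}g_{j+1})}$. Since $g_1\geq\cdots\geq g_M$ and $\mu_1\geq\cdots\geq\mu_M$ give $\mu_j g_j\geq\mu_{j+1}g_{j+1}$ and $f$ is increasing, cross-multiplication reduces the claim to $(\mu_j-\mu_{j+1})(1-\rho)\geq\mu_{j+1}f(\mu_j g_j)-\mu_j f(\mu_{j+1}g_{j+1})$. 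Bounding the right-hand side by dropping the nonnegative term $\mu_j f(\mu_{j+1}g_{j+1})$ and using the uniform estimate $f(x)=\sum_k\frac{1}{\mu_k}\frac{x}{x+\mu_k g_k}<\sum_k\frac{1}{\mu_k}=\rho$, it is enough to verify $\mu_{j+1}\rho\leq(\mu_j-\mu_{j+1})(1-\rho)$, which is precisely the hypothesis $\mu_{j+1}/\mu_j\leq 1-\rho$ of (\ref{eq:dps:condit_satisf}). For Remark~\ref{remark:dps:weight_sel_for_y}, note that when $g_i=g_{i+1}$ the quantity $f(\mu g)/\mu=\sum_k\frac{g}{\mu_k(\mu g+\mu_k g_k)}$ is decreasing in $\mu$, which forces the right-hand side above to be nonpositive and hence gives $\tilde{\mu}_i\geq\tilde{\mu}_{i+1}$ with no condition on the mean ratio.

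For the inductive step, assume $y^{(n-1)}_1\geq\cdots\geq y^{(n-1)}_M\geq 0$. Put $d_l=y^{(n-1)}_l-y^{(n-1)}_{l+1}\geq 0$ for $l<M$ and $d_M=y^{(n-1)}_M\geq 0$, so that $y^{(n-1)}_i=\sum_{l\geq i}d_l$. Summation by parts then gives $\sum_i y^{(n-1)}_i(\tilde{A}_{ij}-\tilde{A}_{i,j+1})=\sum_l d_l\,P_{lj}$, where $P_{lj}=\sum_{i=1}^{l}(\tilde{A}_{ij}-\tilde{A}_{i,j+1})$ are the truncated column-difference sums. As all $d_l\geq 0$, it suffices to establish $P_{lj}\geq 0$ for every $l$ and $j$; adding the already-proven $\tilde{\mu}_j\geq\tilde{\mu}_{j+1}$ then yields $y^{(n)}_j\geq y^{(n)}_{j+1}$ and closes the induction. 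Finally Lemma~\ref{lemma:dps:y_n_to_y} passes the inequality to the limit $y$, completing the proof of Lemma~\ref{lemma:dps:y_inequality}.

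I expect the positivity of the truncated sums $P_{lj}$ to be the main obstacle. The individual differences $\tilde{A}_{ij}-\tilde{A}_{i,j+1}$ are not sign-definite: writing $\tilde{A}_{ij}=\frac{c_j}{\mu_i(\mu_i g_i+\mu_j g_j)}$ with $c_j=\frac{\mu_j g_j}{1-\rho+f(\mu_j g_j)}$, the factor $\mu_j g_j\geq\mu_{j+1}g_{j+1}$ in the numerator favours column $j$, while $1/(1-\rho+f(\mu_j g_j))\leq 1/(1-\rho+f(\mu_{j+1}g_{j+1}))$ favours column $j+1$, so a termwise comparison fails. At the full range $l=M$ the sum collapses to $P_{Mj}=s_j-s_{j+1}\geq 0$ with $s_j=\frac{f(\mu_j g_j)}{1-\rho+f(\mu_j g_j)}$ increasing in its argument, which serves as a consistency check; the real work is to control the truncated sums, comparing the two sequences $(\tilde{A}_{ij})_i$ and $(\tilde{A}_{i,j+1})_i$, which are decreasing in $i$, by exploiting the orderings of $\mu_i$ and $g_i$ and, where the unfavourable denominator factor must be dominated, the geometric separation (\ref{eq:dps:condit_satisf}).
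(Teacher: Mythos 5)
Your reduction is exactly the paper's: the same induction on $n$, the same split of $y^{(n)}_j-y^{(n)}_{j+1}$ into the constant part $\tilde{\mu}_j-\tilde{\mu}_{j+1}$ and the linear part, and the same Abel summation turning the linear part into nonnegative coefficients $d_l$ times truncated column-difference sums $P_{lj}=\sum_{i=1}^{l}(\tilde{A}_{ij}-\tilde{A}_{i,j+1})$. Your proof that $\tilde{\mu}$ is decreasing (cross-multiply, drop the nonnegative term $\mu_j f(\mu_{j+1}g_{j+1})$, bound $f<\rho$, land exactly on $\mu_{j+1}/\mu_j\leq 1-\rho$) is a correct and slightly cleaner version of the paper's Lemma~\ref{lemma:tilde_mu_inequal}, and your equal-weights observation correctly delivers Remark~\ref{remark:dps:weight_sel_for_y}. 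But the proposal is not a complete proof: you explicitly leave $P_{lj}\geq 0$ open, and that is the substantive half of the lemma --- it is precisely the content of the paper's Lemma~\ref{lemma:tilde_A_inequal}. Without it the inductive step does not close; what you have is a correct reduction plus a correctly proven base case.

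The missing step does go through, and more easily than your closing paragraph anticipates: it needs \emph{no} appeal to condition (\ref{eq:dps:condit_satisf}). The paper's device is to recognize the truncated column sum as a single increasing function evaluated at $\mu_j g_j$: one checks $\sum_{i=1}^{r}\tilde{A}_{ij}=f_3(\mu_j g_j)$ with $f_3(x)=h_1(x)/\bigl(1-\rho+h_1(x)+h_2(x)\bigr)$, where $h_1(x)=\sum_{i=1}^{r}\frac{x}{\mu_i(x+\mu_i g_i)}$ collects the first $r$ terms and $h_2(x)=\sum_{k=r+1}^{M}\frac{x}{\mu_k(x+\mu_k g_k)}$ the rest. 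Then $f_3'(x)\geq 0$ reduces (since $h_1'>0$ and $1-\rho>0$) to $h_1'(x)h_2(x)-h_1(x)h_2'(x)\geq 0$, and the cross terms compute to $\sum_{i\leq r}\sum_{k>r}\frac{x^2(\mu_i g_i-\mu_k g_k)}{(x+\mu_i g_i)^2(x+\mu_k g_k)^2\mu_i\mu_k}\geq 0$, using only $\mu_i g_i\geq\mu_k g_k$ for $i\leq r<k$, which holds because both $\mu$ and $g$ are decreasing. Monotonicity of $f_3$ plus $\mu_j g_j\geq\mu_{j+1}g_{j+1}$ then gives $P_{rj}\geq 0$ for all $r,j$ unconditionally; your $l=M$ consistency check is the special case $h_2\equiv 0$. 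So the hypothesis $\mu_{i+1}/\mu_i\leq 1-\rho$ enters the lemma only through the $\tilde{\mu}$ part, exactly where you used it, and your hunch that the truncated sums would need the ``geometric separation'' to dominate the unfavourable denominator factor is the one place your outline points in the wrong direction: the $h_1,h_2$ split and the quotient-derivative computation handle both competing factors at once.
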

\begin{proof} We prove the statement ({\ref{eq:dps:y_i_inequality}})  by
induction. For $ y^{(0)}$ the statement
({\ref{eq:dps:y_i_inequality}}) is true. Let us assume that
({\ref{eq:dps:y_i_inequality}}) is true for the $(n-1)$ step,
\mbox{$y_1^{(n-1)} \geq y_2^{(n-1)} \geq \ldots\geq y_M^{(n-1)}$}. To
prove the induction statement we have to show that \mbox{$y_1^{(n)}
\geq y_2^{(n)} \geq \ldots\geq y_M^{(n)}$}, which is equal to that
$y^{(n)}_j \geq y^{(n)}_p$, if $j \leq p$. As
\begin{eqnarray*}
&& y^{(n)}_j=\tilde{\mu}_j+ \sum_{i=1}^M y^{(n-1)}_i
\tilde{A_{ij}},
\end{eqnarray*}
then
\begin{eqnarray*}
y^{(n)}_j - y^{(n)}_p  &=& \tilde{\mu}_j+ \sum_{i=1}^M y^{(n-1)}_i
\tilde{A_{ij}} - \left(\tilde{\mu}_p+ \sum_{i=1}^M y^{(n-1)}_i
\tilde{A_{ip}}\right)  = \\
& = &  \tilde{\mu}_j -  \tilde{\mu}_p + \sum_{i=1}^M y^{(n-1)}_i
(\tilde{A_{ij}} - \tilde{A_{ip}}).
\end{eqnarray*}
To show that $y^{(n)}_j - y^{(n)}_p$ we need to show that
$\tilde{\mu}_j - \tilde{\mu}_p \geq 0$ and $\sum_{i=1}^M
y^{(n-1)}_i (\tilde{A_{ij}} - \tilde{A_{ip}}) \geq 0 $, when $j
\leq p$. Let us show that to prove that $\sum_{i=1}^M y^{(n-1)}_i
(\tilde{A_{ij}} - \tilde{A_{ip}}) \geq 0 $, $j \leq p$ it is
enough to prove that $\sum_{i=1}^{r}(\tilde{A}_{ij} -
\tilde{A}_{ip}) \geq 0$, $j \leq p$, $r=1,\ldots,M$.  If we
regroup this sum we can get the following
\begin{eqnarray*}
\sum_{i=1}^M y^{(n-1)}_i (\tilde{A_{ij}} - \tilde{A_{ip}}) = && \\
\sum_{i=1}^M (y^{(n-1)}_i-y^{(n-1)}_{i+1}+y^{(n-1)}_{i+1}-\ldots-y^{(n-1)}_M+y^{(n-1)}_M) (\tilde{A_{ij}} - \tilde{A_{ip}}) = && \\
=\sum_{i=1}^{M-1} (y^{(n-1)}_i -
y^{(n-1)}_{i+1})\left[(\tilde{A}_{1j} -
\tilde{A}_{1p})+(\tilde{A}_{2 j} - \tilde{A}_{2 p})+\ldots+
(\tilde{A}_{i j} - \tilde{A}_{ip})\right] +&& \\+ y^{(n-1)}_M (
(\tilde{A}_{1j} - \tilde{A}_{1p})+ \ldots+ (\tilde{A}_{(M-1)j} -
\tilde{A}_{(M-1)p}) + (\tilde{A}_{Mj} - \tilde{A}_{Mp})) =  &&
\end{eqnarray*}
\begin{eqnarray*}
= \sum_{i=1}^{M-1}(y^{(n-1)}_i - y^{(n-1)}_{i+1})
\sum_{k=1}^r(\tilde{A}_{kj} - \tilde{A}_{kp})+ y^{(n-1)}_M
\sum_{k=1}^M (\tilde{A}_{kj} - \tilde{A}_{kp}).
\end{eqnarray*}
As $y^{(n-1)}_{i} \geq y^{(n-1)}_{i+1}$, $i=1,\ldots,M$, according
to the induction step, then to show that $\sum_{i=1}^M y^{(n-1)}_i
(\tilde{A_{ij}} - \tilde{A_{ip}}) \geq 0 $, $j \leq p$ it is
enough to show that $\sum_{i=1}^{r}(\tilde{A}_{ij} -
\tilde{A}_{ip}) \geq 0$, $j \leq p$, $r=1,\ldots,M$. We show this
in Lemma~{\ref{lemma:tilde_A_inequal}}. In
Lemma~{\ref{lemma:tilde_mu_inequal}} we show $\tilde{\mu}_j \geq
\tilde{\mu}_p$, $j\leq p$, when $\frac{\mu_{i+1}}{\mu_{i}} \leq
1-\rho$ for every $i=1,\ldots,M$. Then we prove the induction
statement and so prove the statement of
Lemma~{\ref{lemma:dps:y_n_decreases}}.
\end{proof}

\begin{Lemma}{\label{lemma:tilde_mu_inequal}}
\begin{eqnarray}
&& \tilde{\mu}_1 \geq \tilde{\mu}_2 \ldots \geq  \tilde{\mu}_M,
\label{eq:dps:mu_tilde_ineq}
\end{eqnarray}
if  $\frac{\mu_{i+1}}{\mu_{i}} \leq 1-\rho$ for every
$i=1,\ldots,M$.
\end{Lemma}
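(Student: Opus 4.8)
The plan is to use the explicit diagonal formula for $\tilde{\mu}$ already recorded just before this Lemma and then to \emph{sandwich} each entry $\tilde{\mu}_j$ between $\mu_j$ and $\mu_j/(1-\rho)$; the claimed ordering (\ref{eq:dps:mu_tilde_ineq}) then drops out of the gap hypothesis (\ref{eq:dps:condit_satisf}) by a one-line chain of inequalities. The point is that one should \emph{not} try to compare the quantities $f(\mu_j g_j)$ across $j$ directly, since $f$ is increasing while $\mu_j g_j$ decreases in $j$, so those terms move the wrong way; the crude two-sided bound sidesteps this entirely.

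First I would record that $E-D$ is diagonal, hence so is $(E-D)^{-1}$, and since $\tilde{\mu}=\mu^{T}(E-D)^{-1}$ with $\mu=[\mu_1,\ldots,\mu_M]$, the computation already carried out in the appendix gives
\[
\tilde{\mu}_j=\frac{\mu_j}{1-\rho+f(\mu_j g_j)},\qquad f(x)=\sum_{k}\frac{x}{\mu_k(x+\mu_k g_k)}.
\]
Here $\rho=\sum_k 1/\mu_k$ (because $\lambda_i=1$ in this part of the argument), which is exactly what lets the denominator $1-\sum_k\frac{g_k}{\mu_jg_j+\mu_kg_k}$ collapse into $1-\rho+f(\mu_j g_j)$.

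Second, I would bound $f$. Each summand satisfies $\frac{x}{\mu_k(x+\mu_k g_k)}<\frac{x}{\mu_k x}=\frac{1}{\mu_k}$, since $\mu_k g_k>0$, so $0<f(\mu_j g_j)<\sum_k 1/\mu_k=\rho$ for every $j$. Consequently the denominator of $\tilde{\mu}_j$ lies strictly between $1-\rho$ and $1$, which yields the two-sided bound
\[
\mu_j<\tilde{\mu}_j<\frac{\mu_j}{1-\rho},\qquad j=1,\ldots,M.
\]
Third, I would chain these using (\ref{eq:dps:condit_satisf}): the condition $\mu_{j+1}/\mu_j\le 1-\rho$ is the same as $\mu_{j+1}/(1-\rho)\le \mu_j$, so for each $j$
\[
\tilde{\mu}_{j+1}<\frac{\mu_{j+1}}{1-\rho}\le \mu_j<\tilde{\mu}_j,
\]
which is precisely (\ref{eq:dps:mu_tilde_ineq}).

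I do not expect a genuine obstacle here: once the explicit formula for $\tilde{\mu}_j$ is in hand the proof is a short sequence of estimates. The only step needing care is the \emph{upper} bound $f(\mu_j g_j)<\rho$ holding uniformly in $j$, because this is what pushes the denominator strictly below $1$ and hence delivers the crucial lower bound $\tilde{\mu}_j>\mu_j$; the complementary bound $\tilde{\mu}_j<\mu_j/(1-\rho)$ is immediate from $f>0$. The conceptual content is simply that the spectral-gap-type hypothesis (\ref{eq:dps:condit_satisf}) forces the top of the interval for index $j+1$ to sit below the bottom of the interval for index $j$, so the monotonicity of $\tilde{\mu}$ follows without ever estimating the difference of the $f$-terms.
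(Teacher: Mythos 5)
Your proof is correct and is essentially the paper's own argument: the paper likewise relies only on the uniform bound $0<f_2(x)<\rho$ on the correction term (its $f_2$ relates to your $f$ by $f_2(x)=\rho-f(x)$, so this is exactly your $0<f(\mu_j g_j)<\rho$) together with the gap hypothesis $\mu_{j+1}/\mu_j\le 1-\rho$. The only difference is packaging: the paper bounds the difference $\tilde{\mu}_j-\tilde{\mu}_p$ from below via $\mu_j f_2(\mu_p g_p)-\mu_p f_2(\mu_j g_j)<\mu_j\rho$, while you sandwich each $\tilde{\mu}_j$ in the interval $\left(\mu_j,\,\mu_j/(1-\rho)\right)$ and chain consecutive intervals --- algebraically the same estimate, presented somewhat more transparently.
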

\begin{proof}
Let us compare $\tilde{\mu}_j$ and $\tilde{\mu}_p$, $j \leq p$. If
$\mu_j=\mu_p$ and $g_j=g_p$, then $\tilde{\mu}_j=\tilde{\mu}_p$
and (\ref{eq:dps:mu_tilde_ineq}) is satisfied.  Let us denote
\begin{eqnarray*}
&& f_2(x)= \sum_{k} \frac{g_k}{x+\mu_k g_k},
\end{eqnarray*}
which has the following properties
\begin{eqnarray}
&& 0 <f_2(x) <\rho. \label{eq:dps:f_2_prop}
\end{eqnarray}
Then
\begin{eqnarray*}
\tilde{\mu}_i &=& \frac{\mu_i}{1-\sum_j{\frac{g_j}{\mu_i
g_i+\mu_jg_j}}} = \frac{\mu_i}{1-f_2(\mu_i g_i)}.
\end{eqnarray*}
Let us find
\begin{eqnarray*}
\tilde{\mu}_j - \tilde{\mu}_p &=& \frac{\mu_j}{1-f_2(\mu_j g_j)}-
\frac{\mu_p}{1-f_2(\mu_p g_p)} =\frac{\mu_j-\mu_p - (\mu_j
f_2(\mu_p g_p) - \mu_p f_2(\mu_j g_j) )}{(1-f_2(\mu_j g_j))
(1-f_2(\mu_p g_p))}.
\end{eqnarray*}
 As (\ref{eq:dps:f_2_prop}) then
\begin{eqnarray*}
&& \mu_j f_2(\mu_p g_p) - \mu_p f_2(\mu_j g_j) < \mu_j \rho.
\end{eqnarray*}
Then
\begin{eqnarray*}
\tilde{\mu}_j - \tilde{\mu}_p & >&
\frac{(\mu_j-\mu_p)}{(1-f_2(\mu_j g_j) (1-f_2(\mu_p g_p)))} \left(
1- \rho\left(\frac{\mu_j}{\mu_j-\mu_p}\right) \right) =\\
& =& \frac{(\mu_j-\mu_p)}{(1-f_2(\mu_j g_j) (1-f_2(\mu_p g_p)))}
\left( 1- \rho\left(\frac{1}{1-\frac{\mu_p}{\mu_j}}\right) \right)
\geq 0,
\end{eqnarray*}
when
\begin{eqnarray*}
&& \frac{\mu_p}{\mu_{j}} \leq 1-\rho.
\end{eqnarray*}
So, if  $\frac{\mu_p}{\mu_j} \leq 1-\rho$ and $g_j \geq g_p$, then
$\tilde{\mu}_j \geq \tilde{\mu}_p$. Let us show that if $\mu_j >
\mu_p$ and $g_j=g_p$, then $\tilde{\mu_j} \geq \tilde{\mu_p}$. In
this case

\begin{eqnarray*}
\tilde{\mu}_j - \tilde{\mu}_p &=& \frac{\mu_j}{1-f_2(\mu_j g_j)}-
\frac{\mu_p}{1-f_2(\mu_p g_p)} = \\
& =& \frac{\mu_j-\mu_p - (\mu_j f_2(\mu_p g_j) - \mu_p f_2(\mu_j
g_j) )}{(1-f_2(\mu_j g_j)) (1-f_2(\mu_p g_j))}\\
&=& \frac{\Delta_1}{(1-f_2(\mu_j g_j)) (1-f_2(\mu_p g_j))}.
\end{eqnarray*}

Let us find when $\Delta_1>0$.

\begin{eqnarray*}
&& \Delta_1 = {\mu_j-\mu_p - \left(\mu_j \sum_{k=1}^M
\frac{g_k}{\mu_p
g_j+\mu_k g_k} - \mu_p \sum_{k=1}^M \frac{g_k}{\mu_j g_j+\mu_k g_k}\right)} =\\
&& ={\mu_j-\mu_p - \left(\sum_{k=1}^M
\frac{g_k(g_j(\mu_j^2-\mu_p^2)+\mu_k g_k(\mu_j-\mu_p))}{(\mu_p
g_j+\mu_k g_k)(\mu_p g_j+\mu_k g_k)}\right)} \\
&& = (\mu_j-\mu_p) \left(1-\sum_{k=1}^M
\frac{g_k(g_j(\mu_j+\mu_p)+\mu_k g_k)}{(\mu_p g_j+\mu_k g_k)(\mu_p
g_j+\mu_k g_k)}\right).
\end{eqnarray*}
As
\begin{eqnarray*}
&& 0 < \mu_j \mu_p g_j^2, \quad k=1,\ldots,M , \\
&& {g_k \mu_k (g_j(\mu_j+\mu_p)+\mu_k g_k)}< {(\mu_j \mu_p g_j^2 +
g_j(\mu_j+\mu_p)\mu_k g_k+\mu_k^2 g_k^2)}, \quad k=1,\ldots,M \\
&& {g_k \mu_k (g_j(\mu_j+\mu_p)+\mu_k g_k)}< {(\mu_j g_j+\mu_k
g_k)(\mu_p g_j+\mu_k g_k)}, \quad k=1,\ldots,M \\
&& \frac{g_k(g_j(\mu_j+\mu_p)+\mu_k g_k)}{(\mu_j g_j+\mu_k
g_k)(\mu_p g_j+\mu_k g_k)} < \frac{1}{\mu_k}, \quad k=1, \ldots,M.
\end{eqnarray*}
Then
\begin{eqnarray*}
&&\Delta_1
 > (\mu_j-\mu_p) \left(1-\sum_{k=1}^M{\frac{1}{\mu_k}}\right)=1-\rho > 0.
\end{eqnarray*}

Then we proved the following:
\begin{eqnarray*}
&& \text{If} \quad \mu_j=\mu_p, \,\, g_j=g_p, \quad \text{ then}
\quad \tilde{\mu_j}=\tilde{\mu_p}, \\
&& \text{If}\quad \mu_j>\mu_p, \,\, g_j=g_p, \quad\text{
then}\quad \tilde{\mu_j}>\tilde{\mu_p}, \\
&& \text{If}\quad \frac{\mu_p}{\mu_j} \leq 1-\rho,
\,\, g_j \geq g_p, \quad\text{ then}\quad \tilde{\mu_j}\geq
\tilde{\mu_p}.
\end{eqnarray*}
Setting $p=j+1$ and remembering that $\mu_1 \geq \ldots \geq
\mu_M$, we get that $\tilde{\mu}_1 \geq \tilde{\mu}_2\ldots\geq
\tilde{\mu}_M$ is true when $\frac{\mu_{i+1}}{\mu_i} \leq 1-\rho$
for every $i=1,\ldots,M$. That proves the statement of
Lemma~{\ref{lemma:tilde_mu_inequal}}.

Returning back to the main Theorem~{\ref{theorem:dps:T_monot}},
Lemma~{\ref{lemma:tilde_mu_inequal}} gives condition
({\ref{eq:dps:condit_satisf}}) as a restriction on a system
parameters.

Let us notice that for the job classes such for which the means
are such as $\frac{\mu_{i+1}}{\mu_i} < 1-\rho$, if the weights
given for these classes are equal, then still $\tilde{\mu}_i \geq
\tilde{\mu}_{i+1}$. This condition gives us as a result
Remark~\ref{remark:dps:weight_sel_for_y} and
Remark~{\ref{remark:dps:weight_sel}}.
\end{proof}

\begin{Lemma}{\label{lemma:tilde_A_inequal}}
\begin{eqnarray*}
&& \sum_{i=1}^r {\tilde{A}_{i 1}} \geq  \sum_{i=1}^r {\tilde{A}_{i
2}} \geq \ldots \geq \sum_{i=1}^r {\tilde{A}_{i M}}, \quad r=1,\ldots,M.
\end{eqnarray*}
\end{Lemma}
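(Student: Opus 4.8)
The goal is to prove that for each fixed row-count $r$, the partial column sums of $\tilde{A}$ are nonincreasing in the column index $j$. Recall from the explicit formula derived just before Lemma~\ref{lemma:dps:tilde_A_contraction} that
\begin{eqnarray*}
&& \tilde{A}_{ij} = \frac{\mu_j g_j}{\mu_i(\mu_i g_i + \mu_j g_j)(1-\rho + f(\mu_j g_j))}, \quad f(x)=\sum_k \frac{x}{\mu_k(x+\mu_k g_k)}.
\end{eqnarray*}
The plan is to prove the stronger pointwise monotonicity of each partial sum by comparing consecutive columns $j$ and $p=j+1$, where the enumeration $\mu_1 \geq \cdots \geq \mu_M$ and $g \in G$ give $\mu_j \geq \mu_p$ and $g_j \geq g_p$ simultaneously. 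I would define, for fixed $r$, the function
\begin{eqnarray*}
&& S_r(j) = \sum_{i=1}^r \tilde{A}_{ij} = \frac{1}{1-\rho + f(\mu_j g_j)} \sum_{i=1}^r \frac{\mu_j g_j}{\mu_i(\mu_i g_i + \mu_j g_j)},
\end{eqnarray*}
and show $S_r(j) \geq S_r(j+1)$.

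First I would isolate the dependence on the column parameter. Observe that $S_r(j)$ depends on column $j$ only through the single scalar quantity $w_j := \mu_j g_j$, so it is natural to write $S_r(j) = \Phi_r(w_j)$ where
\begin{eqnarray*}
&& \Phi_r(w) = \frac{1}{1-\rho + f(w)} \sum_{i=1}^r \frac{w}{\mu_i(\mu_i g_i + w)}.
\end{eqnarray*}
The crux is then to establish that $\Phi_r$ is a nondecreasing function of $w$ on $(0,\infty)$, because combined with $w_j = \mu_j g_j \geq \mu_{j+1} g_{j+1} = w_{j+1}$ (which follows since both $\mu_j \geq \mu_{j+1}$ and $g_j \geq g_{j+1}$ for weight vectors in $G$) this immediately yields $S_r(j) = \Phi_r(w_j) \geq \Phi_r(w_{j+1}) = S_r(j+1)$.

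To show $\Phi_r$ is nondecreasing I would differentiate, or more cleanly work with the reciprocal and logarithmic derivatives. Writing $N_r(w) = \sum_{i=1}^r \frac{w}{\mu_i(\mu_i g_i + w)}$ for the numerator sum and noting $f(w) = N_M(w)$ (so the full sum over all $M$ appears in the denominator), the sign of $\Phi_r'(w)$ is governed by $N_r'(w)(1-\rho+f(w)) - N_r(w) f'(w)$. Each summand $\frac{w}{\mu_i(\mu_i g_i+w)}$ is individually increasing in $w$ with derivative $\frac{g_i}{(\mu_i g_i+w)^2}$, so $N_r' > 0$; the main work is to control the competing negative term $-N_r(w)f'(w)$. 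Here the factor $1-\rho > 0$ in the denominator is the friend that keeps things positive, and I expect to need the elementary bound $N_r(w) \leq N_M(w) = f(w)$ together with $f(w) < \rho$ (analogous to the bound $f_2(x)<\rho$ used in Lemma~\ref{lemma:tilde_mu_inequal}) to dominate the subtracted term.

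The main obstacle will be precisely this last monotonicity-in-$w$ estimate: unlike the $\tilde{\mu}$ case in Lemma~\ref{lemma:tilde_mu_inequal}, the partial sum $N_r$ runs only up to $r$ while the denominator involves the full sum $f=N_M$, so the numerator and denominator are not perfectly matched and a naive cancellation does not close. I anticipate handling this by a term-by-term comparison across the two columns rather than by a single global derivative bound, pairing the index-$i$ summand of column $j$ against the index-$i$ summand of column $j+1$ and showing each difference contributes with the correct sign once multiplied out and the common denominators are cleared; this reduces the inequality to verifying, for each $i \leq r$, an algebraic inequality in the positive quantities $\mu_i, g_i, w_j, w_{j+1}$ of the same flavor as the three-line chain of inequalities establishing $\Delta_1 > 0$ in Lemma~\ref{lemma:tilde_mu_inequal}. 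If the per-term sign fails for some indices, the fallback is to keep the grouped form $S_r(j)-S_r(j+1)$ over a common denominator and bound the resulting numerator using $g_j \geq g_{j+1}$ and $w_j \geq w_{j+1}$ together with $1-\rho>0$.
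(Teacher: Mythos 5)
Your setup is exactly right and matches the paper's: the partial column sum depends on the column only through $w_j=\mu_jg_j$, so writing $S_r(j)=\Phi_r(w_j)$ with $\Phi_r(w)=N_r(w)/(1-\rho+f(w))$ and using $w_1\geq w_2\geq\ldots\geq w_M$ (both $\mu$'s and $g$'s nonincreasing) reduces everything to showing $\Phi_r$ is nondecreasing --- this is precisely the function $f_3$ in the paper, with $N_r=h_1$. You also correctly sense that the naive domination route fails: bounding $N_r\leq f<\rho$ cannot control $N_r'(w)(1-\rho+f(w))-N_r(w)f'(w)$, because $f'$ contains the derivative terms of the tail indices $k>r$, which are not majorized by $N_r'$; for small $r$ and $\rho$ near $1$ that bound simply does not close. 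The genuine gap is that neither of your proposed repairs works, and the decisive algebraic idea is missing. Your fallback of pairing the index-$i$ summand of column $j$ against the index-$i$ summand of column $p$ is provably false: take $M=2$, $\mu_1=\mu_2=2.2$, $\lambda_1=\lambda_2=1$ (so $\rho\approx0.909$), $g=(1,\,0.01)$. Then $w_1=2.2$, $w_2=0.022$, and a direct computation gives $\tilde{A}_{21}\approx0.586<\tilde{A}_{22}\approx0.704$, so the per-term comparison $\tilde{A}_{ij}\geq\tilde{A}_{ip}$ fails for $i=2$ even though the partial sums ($0.882\geq0.718$ for $r=2$) behave as claimed. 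The inequality is true only at the level of the initial-segment sums, so any term-by-term argument across columns is doomed.

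The idea that actually closes the argument (the paper's) is a head/tail split \emph{inside one column's derivative}, not a row-by-row split across columns. Write $f=h_1+h_2$ with $h_1=N_r$ the head (indices $i\leq r$, identical to the numerator) and $h_2$ the tail (indices $k>r$). Then the numerator of $\Phi_r'(w)$ is
\begin{equation*}
h_1'(w)\bigl(1-\rho\bigr)+\bigl(h_1'(w)h_2(w)-h_1(w)h_2'(w)\bigr),
\end{equation*}
because the head-against-head terms cancel exactly ($h_1'h_1-h_1h_1'=0$) --- this cancellation is what makes the ``mismatch'' between numerator and denominator harmless. The remaining Wronskian expands as a double sum over head index $i\leq r$ and tail index $k>r$, each term reducing to $w^2(\mu_ig_i-\mu_kg_k)/\bigl(\mu_i\mu_k(w+\mu_ig_i)^2(w+\mu_kg_k)^2\bigr)\geq0$, nonnegative precisely because every head index has $\mu_ig_i\geq\mu_kg_k$ for every tail index. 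Note this ordering holds only because the head is an initial segment, which is also why your cross-column pairing breaks (in the counterexample the ``head'' singleton $\{2\}$ sits below the tail index $1$). So: right reduction, right suspicion about the bound, but the correct pairing --- head versus tail within the derivative --- is absent, and the proposal as written cannot be completed along either suggested route.
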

\begin{proof}
Let us remember $\tilde{A}=M^{-1}AM(E-D)^{-1}$. Then as
$\rho=\sum_{k=1}^M{\frac{1}{\mu_k}}$, then
\begin{eqnarray*}
&& \sum_{i=1}^r {\tilde{A}_{ij}} = \frac{\sum_{i=1}^r{\frac{ \mu_j
g_j}{\mu_i(\mu_j g_j+\mu_i g_i)}}}{1-\sum_{k=1}^M
{\frac{g_k}{\mu_j g_j+ \mu_k g_k}}} =\frac{\sum_{i=1}^r{\frac{
\mu_j g_j}{\mu_i(\mu_j g_j+\mu_i g_i)}}}{1-\rho+\sum_{k=1}^M
{\frac{\mu_j g_j}{\mu_k(\mu_j g_j+ \mu_k g_k)}}}
\end{eqnarray*}
Let us define
\begin{eqnarray*}
&&  f_3(x) = \frac{\sum_{i=1}^r {\frac{x}{\mu_i(x+\mu_i
g_i)}}}{1-\rho + \sum_{k=1}^M{\frac{x}{\mu_k(x+\mu_k g_k)}}} =
\frac{h_1(x)}{1-\rho + h_1(x)+h_2(x)},
\end{eqnarray*}
where
\begin{eqnarray*}
&&  h_1(x) = \sum_{i=1}^r {\frac{x}{\mu_i(x+\mu_i g_i)}}>0, \\
&&  h_2(x) = \sum_{j=r+1}^M {\frac{x}{\mu_j(x+\mu_j g_j)}}>0.
\end{eqnarray*}
Let us show that $f_3(x)$ is increasing on $x$. For that it enough
to show that $\frac{d f_3(x)}{dx} \geq 0$. Let us consider
\begin{eqnarray*}
\frac{d f_3(x)}{dx}& =& \frac{h_1'(x)(1-\rho) +
h_1'(x)h_2(x)-h_1(x)h_2'(x)}{(1-\rho + h_1(x)+h_2(x))^2}
\end{eqnarray*}
Since $h_1'(x)>0$ and $1-\rho>0$:
\begin{eqnarray*}
&& \frac{d f_3(x)}{dx}\geq0 \quad \mbox{if} \quad h_1'(x) h_2(x)-
h_1(x) h_2'(x)\geq0.
\end{eqnarray*}
Let us consider
\begin{eqnarray*}
h_1'(x) h_2(x)- h_1(x) h_2'(x) & = & \sum_{i=1}^r
{\frac{g_i}{(x+\mu_i g_i)^2}}\sum_{k=r+1}^M
{\frac{x}{\mu_k(x+\mu_k g_k)}} - \sum_{i=1}^r
{\frac{x}{\mu_i(x+\mu_i g_i)}}\sum_{k=r+1}^M
{\frac{g_k}{(x+\mu_k g_k)^2}} = \\
&=& \sum_{i=1}^r \sum_{k=r+1}^M \left({\frac{g_i x} {(x+ \mu_i
g_i)^2 (x+\mu_k g_k)\mu_k}}-{\frac{g_k x}
{\mu_i (x+\mu_i g_i)(x+\mu_k g_k)^2}}\right) = \\
&=& \sum_{i=1}^r \sum_{k=r+1}^M \frac{x}{(x+\mu_i g_i) (x+\mu_k
g_k)} \left({\frac{g_i}{\mu_k(x+\mu_i g_i)}}
-{\frac{g_k}{\mu_i(x+\mu_k g_k)}}\right) = \\
&=& \sum_{i=1}^r \sum_{k=r+1}^M \frac{x}{(x+\mu_i g_i) (x+\mu_k
g_k)} \left({\frac{\mu_i g_i (x+g_k \mu_k) -
\mu_k g_k (x+\mu_i g_i)} {\mu_i \mu_k (x+\mu_k g_k)(x+\mu_i g_i)}}\right) = \\
&=& \sum_{i=1}^r \sum_{k=r+1}^M \frac{x^2 \left(\mu_i g_i - \mu_k
g_k \right) }{(x+\mu_i g_i)^2(x+\mu_k g_k)^2 \mu_k\mu_i} \geq 0,
\end{eqnarray*}
Then $\frac{d f_3(x)}{dx} \geq 0$ and $f_3(x)$ is an increasing
function of $x$. As $\mu_j g_j \geq \mu_p g_p $, $j\leq p$, then
we prove the statement of Lemma~{\ref{lemma:tilde_A_inequal}}.
\end{proof}

\newpage
\tableofcontents


\begin{thebibliography}{1}


\bibitem{AAA06}
E. Altman, K. Avrachenkov, U. Ayesta, ``A survey on discriminatory
processor sharing'', Queueing Systems, Volume 53, Numbers 1-2,
June 2006 , pp. 53-63(11).

\bibitem{AJK04}
E. Altman, T. Jimenez, and D. Kofman, ``DPS queues with stationary
ergodic service times and the performance of TCP in overload,'' in
Proceedings of IEEE Infocom, Hong-Kong, March 2004.


\bibitem{AABN}
K. Avrachenkov, U. Ayesta, P. Brown, R. Nunez-Queija,
``Discriminatory Processor Sharing Revisited'',
INFOCOM 2005. 24th Annual Joint Conference of the
IEEE Computer and Communications Societies. Proceedings IEEE, Vol. 2 (2005), pp. 784-795 vol. 2.


\bibitem{BT01}
T. Bu and D. Towsley, ``Fixed point approximation for TCP behaviour in an AQM network'',
in Proceedings of ACM SIGMETRICS/Performance, pp. 216-225, 2001.

\bibitem{CBBLR05}
S. K. Cheung, J. L. van den Berg, R.J. Boucherie, R. Litjens, and F. Roijers,
``An analytical packet/flow-level modelling approach for
wireless LANs with quality-of-service support'', in Proceedings of ITC-19, 2005.

\bibitem{FMI90}
G. Fayolle, I. Mitrani, R. Iasnogorodski, ``Sharing a Processor
Among Many Job Classes'', Journal of the ACM (JACM),Volume 27,
Issue 3 (July 1980), pp. 519 - 532, 1980, ISSN:0004-5411.

\bibitem{GM02}
L. Guo and I. Matta, ``Scheduling flows with unknown sizes:
approximate analysis'', in Proceedings of the 2002 ACM SIGMETRICS
international conference on Measurement and modeling of computer
systems, 2002, pp. 276-277.


\bibitem{KNB05}
G. van Kessel, R. Nunez-Queija, S. Borst, ``Differentiated
bandwidth sharing with disparate flow sizes'', INFOCOM 2005. 24th
Annual Joint Conference of the IEEE Computer and Communications
Societies. Proceedings IEEE, Vol. 4 (2005), pp. 2425-2435 vol. 4.

\bibitem{KRNB04}
G. van Kessel, R. Nunez-Queija, S. Borst, ``Asymptotic regimes and
approximations for discriminatory processor sharing'', SIGMETRICS
Perform. Eval. Rev., vol. 32, pp. 44-46, 2004.


\bibitem{KK06}
B. Kim and J. Kim, ``Comparison of DPS and PS systems according to
DPS weights'', Communications Letters, IEEE, vol. 10, issue 7, pp.
558-560, ISSN:1089-7798, July 2006.

\bibitem{Kle67}
L. Kleinrock, ``Time-shared systems: A theoretical treatment'',
J.ACM, vol.14, no. 2, pp. 242-261, 1967.


\bibitem{HT05}
Y. Hayel and B. Tuffin, ``Pricing for heterogeneous services at a discriminatory
processor sharing queue'', in Proceedings of Networking, 2005.

\bibitem{RS94}
K. M. Rege, B. Sengupta, ``A decomposition theorem and related
results for the discriminatory processor sharing queue'', Queueing
Systems, Springer Netherlands ISSN 0257-0130, 1572-9443, Issue
Volume 18, Numbers 3-4 / September, 1994, pp. 333-351, 2005.


\bibitem{RS96}
K. M. Rege, B. Sengupta, ``Queue-Length Distribution for the
Discriminatory Processor-Sharing Queue'', Operations Research,
Vol. 44, No. 4 (Jul. - Aug., 1996), pp. 653-657.


\bibitem{RR94}
Rhonda Righter, "Scheduling", in M. Shaked and J. Shanthikumar  (eds),
"Stochastic Orders", 1994. 

\end{thebibliography}
\end{document}